\newtheorem{thm}{Theorem}
\newtheorem{lem}{Lemma}
\newtheorem{proof}{proof}
\newtheorem{defn}{Definition}
\newtheorem{rem}{Remark}
\begin{document}

\title{New Construction of $q$-ary Codes Correcting a Burst of
at most $t$ Deletions}


\author{\IEEEauthorblockN{Wentu Song\IEEEauthorrefmark{1},
Kui Cai\IEEEauthorrefmark{1} and Tony Q. S.
Quek\IEEEauthorrefmark{2}} \IEEEauthorblockA{
    \IEEEauthorrefmark{1}Science, Mathematics and Technology
    Cluster, Singapore University of Technology and Design,
    Singapore 487372\\
    \IEEEauthorrefmark{2}Information Systems Technology and Design
    Pillar, Singapore University of Technology and Design,
    Singapore 487372\\ Email: \{wentu\_song,
    cai\_kui, tonyquek\}@sutd.edu.sg}}

\maketitle

{\let\thefootnote\relax\footnotetext{This work was supported by
SUTD Kickstarter Initiative (SKI) Grant  2021\_04\_05 and the
Singapore Ministry of Education Academic Research Fund Tier 2
T2EP50221-0036.}}

\begin{abstract}
In this paper, for any fixed positive integers $t$ and $q>2$, we
construct $q$-ary codes correcting a burst of at most $t$
deletions with redundancy $\log n+8\log\log n+o(\log\log
n)+\gamma_{q,t}$ bits and near-linear encoding/decoding
complexity, where $n$ is the message length and $\gamma_{q,t}$ is
a constant that only depends on $q$ and $t$. In previous works
there are constructions of such codes with redundancy $\log
n+O(\log q\log\log n)$ bits or $\log n+O(t^2\log\log n)+O(t\log
q)$. The redundancy of our new construction is independent of $q$
and $t$ in the second term.
\end{abstract}

\section{Introduction}
Study of deletion/insertion correcting codes, which was originated
in 1960s, has made a great progress in recent years. One of the
basic problem is to construct codes with low redundancy and low
encoding/decoding complexity, where the redundancy of a $q$-ary
$(q\geq 2)$ code $\mathcal C$ of length $n$ is defined as
$n-\log_q|\mathcal C|$ in symbol or $(n-\log_q|\mathcal C|)\log q$
in bits.\footnote{In this paper, for any real $x>0$, for
simplicity, we write $\log_2x=\log x$.}

The famous VT codes were proved to be a family of single-deletion
correcting binary codes and are asymptotically optimal in
redundancy \cite{Levenshtein65}. The VT construction was
generalized to nonbinary single-deletion correcting codes in
\cite{Tenengolts84}, and to a new version in \cite{Nguyen-23}
using differential vector, with asymptotically optimal redundancy
and efficient encoding/decoding. Other works in binary and
nonbinary codes for correcting multiple deletions can be found in
\cite{Brakensiek18}-\cite{S-Liu-23} and the references therein.

Burst deletions and insertions, which means that deletions and
insertions occur at consecutive positions in a string, are a class
of errors that can be found in many applications, such as
DNA-based data storage and file synchronization. For binary case,
the maximal cardinality of a $t$-burst-deletion correcting code
$($i.e., a code that can correct a burst of \emph{exactly} $t$
deletions$)$ is proved to be asymptotically upper bounded by
$2^{n-t+1}/n$ \cite{Levenshtein67}, so its redundancy is
asymptotically lower bounded by $\log n+t-1$. Several
constructions of binary codes correcting a burst of exactly $t$
deletions have been reported in \cite{L-Cheng-14,Schoeny2017},
where the construction in \cite{Schoeny2017} achieves an optimal
redundancy of $\log n+(t-1)\log\log n+k-\log k$. A more general
class, i.e., codes correcting a burst of \emph{at most} $t$
deletions, were also constructed in the same paper
\cite{Schoeny2017}, and this construction was improved in
\cite{R-Gabrys-17} to achieve a redundancy of $\lceil\log
t\rceil\log n+(t(t+1)/2-1)\log\log n+c_t$ for some constant $c_t$
that only depends on $t$. In \cite{Lenz20}, by using VT constraint
and shifted VT constraint in the so-called $(\bm p,\delta)$-dense
strings, binary codes correcting a burst of at most $t$ deletions
were constructed, with an optimal redundancy of $\log
n+t(t+1)/2\log\log n+c'_t$, where $c'_t$ is a constant depending
only on $t$.

In the recent parallel works \cite{Wang-S-22} and \cite{Wentu23},
$q$-ary codes correcting a burst of at most $t$ deletions were
constructed for even integer $q>2$, with redundancy $\log n+O(\log
q\log\log n)$, or more specifically, $\log n+(8\log q+9)\log\log
n+\gamma_t'+o(\log\log n)$ bits for some constant $\gamma_t'$ that
only depends on $t$. The basic techniques in \cite{Wang-S-22} and
\cite{Wentu23} are to represent each $q$-ary string as a binary
matrix whose column are the binary representation of the entries
of the corresponding $q$-ary string, with the constraint that the
first row of the matrix representation is $(\bm p,\delta)$-dense.
Then the first row of the matrix is protected by binary burst
deletion correcting codes of length $n$ and the other rows are
protected by binary burst deletion correcting codes of length not
greater than $2\delta$, which results in the redundancy of $\log
n+O(\log q\log\log n)$ bits of the constructed code. A different
construction of $q$-ary codes correcting a burst of at most $t$
deletions was reported in a more recent work \cite{Nguyen-23},
which has redundancy $\log n+O(t^2\log\log n)+O(t\log q)$.

In this paper, we construct $q$-ary codes correcting a burst of at
most $t$ deletions for any fixed $t$ and $q>2$. We consider
$q$-ary $(\bm p,\delta)$-dense strings, which are defined similar
to binary $(\bm p,\delta)$-dense strings as in \cite{Lenz20}, and
give an efficient algorithm for encoding and decoding of $q$-ary
$(\bm p,\delta)$-dense strings. In our construction, a VT-like
function is used to locate the deletions within an interval of
length not greater than $3\delta$, which results in $\log n$ bits
in redundancy. In addition, two functions are used to recover the
substring destroyed by deletions, which results in $8\log\log
n+o(\log\log n)+\gamma_{q,t}$ bits in redundancy, where
$\gamma_{q,t}$ is a constant that only depends on $q$ and $t$.
Thus, the total redundancy of our construction is $\log
n+8\log\log n+o(\log\log n)+\gamma_{q,t}$ bits. The
encoding/decoding complexity of our construction is
$O(q^{7t}n(\log n)^3)$. Compared to previous work, the redundancy
of our new construction is independent of $q$ and $t$ in the
second term.

In Section \uppercase\expandafter{\romannumeral 2}, we introduce
related definitions and notations. In Section
\uppercase\expandafter{\romannumeral 3}, we study pattern dense
$q$-ary strings. Our new construction of $q$-ary burst-deletion
correcting codes is given in Section
\uppercase\expandafter{\romannumeral 4}, and the paper is
concluded in Section \uppercase\expandafter{\romannumeral 5}.

\section{Preliminaries}

Let $[m,n]=\{m,m+1,\ldots,n\}$ for any two integers $m$ and $n$
such that $m\leq n$ and call $[m,n]$ an \emph{interval}. If $m>n$,
then let $[m,n]=\emptyset$. For simplicity, we denote $[n]=[1,n]$
for any positive integer $n$. The size of a set $S$ is denoted by
$|S|$.

Given any integer $q\geq 2$, let $\Sigma_q=\{0,1,2,\cdots,q-1\}$.
For any sequence (also called a string or a vector)
$\bm{x}\in\Sigma_q^n$, $n$ is called the length of $\bm{x}$ and
denote $|\bm x|=n$. We will denote $\bm{x}=(x_1,x_2,\ldots,x_n)$
or $\bm{x}=x_1x_2\cdots x_n$. For any set $I=\{i_1, i_2, \ldots,
i_m\}\subseteq [n]$ such that $i_1<i_2<\cdots<i_m$, denote
$x_I=x_{i_1} x_{i_2} \cdots x_{i_m}$ and call $x_I$ a
\emph{subsequence} of $\bm x$. If $I=[i,j]$ for some $i,j\in[1,n]$
such that $i\leq j$, then $x_{I}=x_{[i,j]}=x_{i}x_{i+1}\cdots
x_{j}$ is called a \emph{substring} of $\bm x$. We say that $\bm
x$ \emph{contains} $\bm p~($or $\bm p$ is contained in $\bm x)$ if
$\bm{p}$ is a substring of $\bm{x}$. For any $\bm x\in\Sigma_q^n$
and $\bm y\in\Sigma_q^{n'}$, we use $\bm x\bm y$ to denote their
\emph{concatenation}, i.e., $\bm x\bm y=x_1x_2\cdots
x_{n}y_1y_2\cdots y_{n'}$. We also use notations such as $\bm x_0,
\bm x_1, \cdots, \bm x_k$ to denote substrings of a sequence $\bm
x$. For example, the notation $\bm x=\bm x_1\bm x_2\cdots\bm x_k$
means that the sequence $\bm x$ consists of $k$ substrings $\bm
x_1, \bm x_2, \cdots, \bm x_k$.

Let $t\leq n$ be a nonnegative integer. For any
$\bm{x}\in\Sigma_q^n$, let $\mathcal D_t(\bm x)$ denote the set of
subsequences of $\bm x$ of length $n-t$, and let $\mathcal
B_{t}(\bm x)$ denote the set of subsequences $\bm y$ of $\bm x$
that can be obtained from $\bm x$ by a burst of $t$ deletions,
that is $\bm y=x_{[n]\backslash D}$ for some interval
$D\subseteq[n]$ of length $t~($i.e., $D=[i,i+t-1]$ for some
$i\in[n-t+1])$. Moreover, let $\mathcal B_{\leq t}(\bm
x)=\bigcup_{t'=0}^t\mathcal B_{t'}(\bm x)$, i.e., $\mathcal
B_{\leq t}(\bm x)$ is the set of subsequences of $\bm x$ that can
be obtained from $\bm x$ by a burst of at most $t$ deletions.
Clearly, $\mathcal B_{ 1}(\bm x)=\mathcal D_{1}(\bm x)$ and
$\mathcal B_{t}(\bm x)\subseteq\mathcal D_{t}(\bm x)$ for $t\geq
2$.

A code $\mathcal C\subseteq\Sigma_q^n$ is said to be a
$t$-\emph{deletion correcting code} if for any codeword $\bm
x\in\mathcal C$, given any $\bm y\in\mathcal D_t(\bm x)$, $\bm x$
can be uniquely recovered from $\bm y$; the code $\mathcal
C\subseteq\Sigma_q^n$ is said to be capable of \emph{correcting a
burst of at most $t$ deletions} if for any $\bm x\in\mathcal C$,
given any $\bm y\in\mathcal B_{\leq t}(\bm x)$, $\bm x$ can be
uniquely recovered from $\bm y$. In this paper, we will always
assume that $q$ and $t$ are constant with respect to $n$.

Let $\bm c=(c_1,c_2,\cdots,c_n)\in\Sigma_2^n$. The VT syndrome of
$\bm c$ is defined as
$$\mathsf{VT}(\bm c)=\sum_{i=1}^nic_i~\mod (n+1).$$
It was proved in \cite{Levenshtein65} that for any $\bm
c\in\Sigma_2^n$, given $\mathsf{VT}(\bm c)$ and any $\bm
y\in\mathcal D_1(\bm c)$, one can uniquely recover $\bm x$.

If $q>2$, for each $\bm x=(x_1,x_2,\cdots,x_n)\in\Sigma_q^n$, let
$\phi(\bm x)=(\phi(\bm x)_1,\phi(\bm x)_2,\cdots,\phi(\bm
x)_{n})\in\Sigma_2^{n}$ such that $\phi(\bm x)_1=0$ and for each
$i\in[2,n]$, $\phi(\bm x)_i=1$ if $x_{i}\geq x_{i-1}$ and
$\phi(\bm x)_i=0$ if $x_{i}<x_{i-1}$. $($One can also let
$\phi(\bm x)_1=1$ for all $\bm x\in\Sigma_q^n.)$ Then we have
$q$-ary codes for correcting a single deletion.

\begin{lem}\cite{Tenengolts84}\label{q-sgl-del}
For any $\bm x\in\Sigma_q^n$, given $\mathsf{VT}(\phi(\bm
x)_{[2,n]})$, $\mathsf{Sum}(\bm x)$ and any $\bm y\in\mathcal
D_1(\bm x)$, one can uniquely recover $\bm x$, where $\phi(\bm
x)_{[2,n]}=(\phi(\bm x)_2,\cdots,\phi(\bm x)_{n})$ and
$$\mathsf{Sum}(\bm x)=\sum_{i=1}^nx_i\mod q.$$
\end{lem}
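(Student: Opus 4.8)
The plan is to follow Tenengolts' original argument, decoding in three stages: first reconstruct the binary sign sequence $\phi(\bm x)$, then use it to localize the deleted symbol within a monotone stretch of $\bm x$, and finally use $\mathsf{Sum}(\bm x)$ to recover its value. For Stage~1, suppose $\bm y$ is obtained from $\bm x$ by deleting $x_k$, and I would show that $\phi(\bm y)_{[2,n-1]}\in\mathcal D_1(\phi(\bm x)_{[2,n]})$, i.e., a single symbol deletion in $\bm x$ induces a single bit deletion in the length-$(n-1)$ string $\phi(\bm x)_{[2,n]}$ (note $\phi(\bm x)_1=\phi(\bm y)_1=0$ always, so only the $[2,\cdot]$ part matters). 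If $k\in\{1,n\}$ then $\phi(\bm y)_{[2,n-1]}$ is literally $\phi(\bm x)_{[2,n]}$ with its first or last bit removed. If $1<k<n$, deleting $x_k$ replaces the consecutive pair $\bigl(\phi(\bm x)_k,\phi(\bm x)_{k+1}\bigr)$ by the single bit $b$, where $b=1$ if $x_{k+1}\ge x_{k-1}$ and $b=0$ otherwise; a short case check over the four patterns of $\bigl(\phi(\bm x)_k,\phi(\bm x)_{k+1}\bigr)$ shows that this is always the same as deleting exactly one of the two bits (if the pair is constant then $b$ is forced to that common value; if the pair is $01$ or $10$ then, whatever value $b$ takes, the outcome agrees with deleting the other coordinate).

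Stage~2: since $\mathsf{VT}(\phi(\bm x)_{[2,n]})$ is given and $\phi(\bm y)_{[2,n-1]}$ is a single deletion of it, the VT decoder of \cite{Levenshtein65} recovers $\phi(\bm x)_{[2,n]}$, hence $\phi(\bm x)$, uniquely; moreover it identifies the run of $\phi(\bm x)_{[2,n]}$ from which a bit was removed. Translating through the Stage~1 bookkeeping, the deleted position $k$ must lie in an interval $[j_1,j_2+1]$ on which the recovered signs force $\bm x$ to be monotone (non-decreasing if that run consists of $1$'s, strictly decreasing if of $0$'s), and the corresponding coordinates of $\bm y$ are exactly this monotone stretch with one entry removed.

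Stage~3: every coordinate of $\bm x$ other than $x_k$ occurs in $\bm y$, so $x_k\equiv\mathsf{Sum}(\bm x)-\mathsf{Sum}(\bm y)\pmod q$ determines $x_k\in\Sigma_q$ exactly. Inserting this known value back into the monotone stretch of $\bm y$ found in Stage~2 gives a unique string: if $x_k$ already appears in that stretch (only possible in the non-decreasing case) it must be inserted into that run, and every such placement yields the same string; otherwise $x_k$ falls into a uniquely determined gap. Finally, any other candidate $\bm x'$ consistent with the given data would, by Stages~1--2, have its deleted coordinate in the same interval and, by the sum constraint, the same deleted value, hence equal $\bm x$; this gives uniqueness.

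I expect Stage~3 to be the main obstacle: one has to convert the localization of the deleted \emph{bit} of the sign sequence into a localization of the deleted \emph{symbol} of $\bm x$, and then argue that knowing the value together with this localization pins $\bm x$ down even though the deletion position itself need not be unique (it is unique only up to a run). The Stage~1 case analysis is routine but must be carried out carefully for the non-monotone neighbour patterns, where the induced bit deletion is not simply ``delete $\phi(\bm x)_k$''.
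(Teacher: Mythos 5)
The paper does not prove this lemma at all: it is quoted directly from \cite{Tenengolts84}, so there is no internal proof to compare against. Your three-stage argument is the standard Tenengolts decoding argument, and it is correct: the case check in Stage~1 (constant pair forces $b$, a $01$ or $10$ pair is consistent with deleting either coordinate) does establish $\phi(\bm y)_{[2,n-1]}\in\mathcal D_1(\phi(\bm x)_{[2,n]})$, including the boundary cases $k\in\{1,n\}$; Levenshtein's result then recovers $\phi(\bm x)_{[2,n]}$ and localizes the deleted bit up to a run; and the sum constraint gives the deleted value exactly. The only place I would tighten the write-up is your closing sentence: ``deleted coordinate in the same interval and the same deleted value, hence equal $\bm x$'' is not a valid implication by itself (inserting the same value at different places in a short window generally gives different strings). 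What saves it is precisely the observation in your preceding sentence, which deserves to carry the weight explicitly: since any candidate $\bm x'$ has the \emph{same recovered signature}, its restriction to the localized stretch is an insertion of $v$ into the same window of $\bm y$ that keeps that window non-decreasing (run of $1$'s) or strictly decreasing (run of $0$'s), and such an insertion is unique as a string --- it is the monotone arrangement of the window's multiset together with $v$. (Equivalently, one can argue directly that $\phi(\bm x)=\phi(\bm x')$ with both strings insertions of the same $v$ into the same $\bm y$ forces a chain of inequalities $v\le y_{k'}\le\cdots\le y_k\le v$ or a strict reverse chain, hence $\bm x=\bm x'$.) With that step made explicit, the proof is complete and is, as far as I can tell, exactly the classical argument the citation points to.
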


The following lemma generalizes the construction in
\cite{Sima20-1} to $q$-ary codes $(q>2)$ and will be used in our
new construction.
\begin{lem}\label{lem-sgbd-a}
Suppose that $q$ and $t$ are constants with respect to $n$. There
exists a function $h:\Sigma_q^n\rightarrow\Sigma_q^{4\log_q
n+o(\log_q n)}$, computable in time $O(q^tn^3)$, such that for any
$\bm x\in\Sigma_q^n$, given $h(\bm x)$ and any $\bm y\in\mathcal
B_{\leq t}(\bm x)$, one can uniquely recover $\bm x$.
\end{lem}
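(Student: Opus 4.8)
The goal is to build a "hash function" $h$ that protects against a burst of at most $t$ deletions, with roughly $4\log_q n$ redundant symbols, generalizing the binary construction of Sima–Bruck–type syndrome compression. Let me think about how I'd approach this.

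First, recall the structure of the binary result being cited. In the binary setting, one has a "strong" labeling scheme: given a codeword $\bm x$ and a corrupted version $\bm y \in \mathcal B_{\le t}(\bm x)$, there is a short "signature" $f(\bm x)$ of length $O(t^2 \log n)$ or so, computable locally, such that $\bm x$ is recoverable from $\bm y$ and $f(\bm x)$; then syndrome compression (hashing $f(\bm x)$ down via a Reed–Solomon-like or $k$-wise independent hash) reduces the redundancy to roughly $\log n + O(\log\log n)$, but crucially just to *locate and verify*, one actually needs a hash that distinguishes $\bm x$ from the (polynomially many) "confusable" strings. The key combinatorial fact is that $|\mathcal B_{\le t}(\bm x)| \le$ poly$(n)$ and, more importantly, the number of strings $\bm x'$ with $\mathcal B_{\le t}(\bm x') \cap \mathcal B_{\le t}(\bm x) \ne \emptyset$ is at most poly$(n)$ — actually we need: for a fixed $\bm y$, the number of $\bm x$ of length $n$ with $\bm y \in \mathcal B_{\le t}(\bm x)$ is $O(n^{?})$.

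So here's the concrete plan. **Step 1:** Reduce the $q$-ary problem to a binary one symbol-wise. Map $\bm x \in \Sigma_q^n$ to a binary string of length $n\lceil \log q\rceil$ by writing each symbol in binary — but a burst of $t$ deletions in the $q$-ary string corresponds to a *structured* burst in the binary string (deleting $t$ consecutive blocks of $\lceil\log q\rceil$ bits). Actually, a cleaner route, and the one I'd pursue: treat $\Sigma_q$ as a fixed alphabet and prove the statement directly by syndrome compression over $\Sigma_q$. **Step 2:** Show there is a "weak" function $g : \Sigma_q^n \to \Sigma_q^{O(t^2 \log_q n)}$ (or even $O(\log_q n)$ with better constants — but size doesn't matter yet) such that $\bm x$ is recoverable from any $\bm y \in \mathcal B_{\le t}(\bm x)$ together with $g(\bm x)$. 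This is the "expensive but works" version; e.g. $g$ could encode $\bm x$'s first-differences pattern boundaries, or one can simply appeal to the existence of *some* burst-correcting hash with poly-size redundancy. **Step 3 (the heart — syndrome compression):** Bound $N := \max_{\bm y} |\{\bm x \in \Sigma_q^n : \bm y \in \mathcal B_{\le t}(\bm x)\}|$; this is at most $O(t \cdot n \cdot q^t) = \mathrm{poly}(n)$ because $\bm x$ is determined by $\bm y$, the burst length $t' \le t$, the insertion position (at most $n$ choices), and the $t'$ inserted symbols ($q^{t'}$ choices). Then pick a prime-power modulus $m$ with $\log_q m = \log_q\log_q n + o(\log_q\log_q n)$ and a hash $\psi_a(\bm v) = \sum_i a^{i} v_i \bmod m$ (or a Reed–Solomon evaluation) with $a$ chosen so that $\psi_a$ separates $g(\bm x)$ from $g(\bm x')$ for every one of the $N$ confusable $\bm x'$ — a counting/pigeonhole argument shows a good $a$ exists, found in time poly$(n)$ by brute force. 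Set $h(\bm x) = \big(g(\bm x) \bmod\text{(something tiny)}, \psi_a(g(\bm x)), a\big)$ — the standard trick is to output $\psi_a(g(\bm x))$ together with a *second*, fixed small hash so that decoding can enumerate candidates and verify; total length $4\log_q n + o(\log_q n)$ if we're careful, since we actually need two evaluation points / a length-$\approx 2$-block RS codeword plus the description of $a$.

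**Decoding** proceeds as: from $\bm y$, enumerate all $\le N$ candidate strings $\bm x'$ with $\bm y \in \mathcal B_{\le t}(\bm x')$; for each, check whether $h(\bm x') = h(\bm x)$; by the separation property exactly one matches, namely $\bm x$ itself. Wait — I should double check the redundancy bookkeeping: $g$ has length $O(t^2\log_q n)$ which is $O(\log_q n)$ for constant $t$, but the *output* $h$ must be short; $\psi_a(g(\bm x))$ lives in $\mathbb Z_m$ so it's $\log_q m = o(\log_q n)$ symbols, and we also store $a \in \mathbb Z_m$, another $o(\log_q n)$ — so where does the $4\log_q n$ come from? It must be that the "weak" stage itself cannot be compressed below $\log_q n$: locating a *burst* (an interval) requires a VT-type syndrome of size $\log n$ bits $= \log_q n / \log_2 q \cdot \log_2 q$... hmm, actually $\log n$ bits is $\log n / \log q$ $q$-ary symbols. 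Let me reconsider: the claim is $4\log_q n$ *symbols*, i.e. $4\log n$ *bits*. So the plan is: one $\log n$-bit piece to locate the burst window to within an interval (a VT-like or concatenated-code position hash), then the remaining $3\log n$ bits via syndrome compression of the local content — actually the cleaner reading is that $h$ has *two* components, a locator of $\log_q n$ symbols and a content-recoverer, and after syndrome compression the content-recoverer is $o(\log_q n)$, but one pays $4\log_q n$ because the Sima–Bruck scheme being generalized uses a $2$-repetition-style protection (store the hash twice, or store hash + hash-of-hash) and the locator itself is not one but several VT syndromes.

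**The main obstacle.** The hard part will be Step 2 / the interaction between burst *length* ambiguity and *position* ambiguity: unlike a single deletion, when $\bm y \in \mathcal B_{\le t}(\bm x)$ we do not know $t' = |\bm x| - |\bm y|$... wait, actually we do, since $|\bm y| = n - t'$ tells us $t'$. OK so the real obstacle is different: it is constructing the weak function $g$ for a *$q$-ary* burst — the binary analysis in \cite{Sima20-1} uses that between consecutive deletions in a burst the string is essentially unchanged, and for $q>2$ one must handle the "differential" encoding (the $\phi$ map above) correctly, since a burst deletion in $\bm x$ induces a *single-symbol* change plus a burst deletion in $\phi(\bm x)$, not a clean burst. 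I expect the bulk of the real work to be: (a) pinning down $g$ so that it genuinely recovers $\bm x$ from $(\bm y, g(\bm x))$ for the $q$-ary burst model — likely by combining a $q$-ary Tenengolts-type sum $\mathsf{Sum}$ restricted to windows with a burst-locator on $\phi(\bm x)$ — and (b) verifying the counting bound $N = O(q^t n)$ cleanly, including the degenerate cases $t' < t$ and bursts at the boundary. The syndrome-compression step itself (Step 3) is by now routine once $g$ and $N$ are in hand, so I would state it as a black box and spend the proof's length on $g$.
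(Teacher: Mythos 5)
Your high-level architecture (a long ``weak'' sketch that determines $\bm x$ from any $\bm y\in\mathcal B_{\le t}(\bm x)$, a polynomial bound on confusable strings, compression of the sketch via a pigeonhole-chosen hash, and decoding by enumerate-and-check) is exactly the route the paper takes, but the two places where you defer the work are precisely where the proof lives, and as written both have gaps. First, the weak sketch $g$ is never constructed. The paper's choice is the interleaving trick: since $t'=n-|\bm y|$ is known, for each $t'\in[t]$ and each residue $j\in[t']$ it takes the subsequence $x_{I_{t',j}}$ with $I_{t',j}=\{\ell\in[n]:\ell\equiv j \bmod t'\}$; a burst of exactly $t'$ deletions removes exactly one symbol from each such subsequence, so storing the $q$-ary Tenengolts pair $\big(\mathsf{VT}(\phi(x_{I_{t',j}})_{[2,\cdot]}),\mathsf{Sum}(x_{I_{t',j}})\big)$ for all $t',j$ (about $\frac{t(t+1)}{2}(\log(n+1)+\log q)$ bits in total) already determines $\bm x$ from $\bm y$ by Lemma \ref{q-sgl-del}, and hence separates $\bm x$ from every confusable string. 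Your worry that $\phi$ turns a burst into ``deletion plus substitution'' is real if you apply $\phi$ to $\bm x$ globally, but it simply does not arise with interleaving; your proposed substitute (``Tenengolts-type sums restricted to windows plus a burst locator on $\phi(\bm x)$'') is left unspecified and is not shown to work.

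Second, your compression step uses the wrong confusable set and wrong parameters. The hash is fixed at encoding time, before $\bm y$ is seen, so it must separate $\bm x$ from all of $\mathcal N_t(\bm x)=\{\bm x'\neq\bm x:\mathcal B_{\le t}(\bm x)\cap\mathcal B_{\le t}(\bm x')\neq\emptyset\}$, whose size is $O(tn^2q^t)$ (deletion window in $\bm x$, insertion window, inserted symbols); your $N=O(tnq^t)$ counts only the candidates compatible with a single received word, which bounds the decoder's enumeration but not the set the hash must beat. With $\Theta(n^2)$ strings to avoid, a modulus $m$ with $\log_q m\approx\log_q\log_q n$ is hopeless even with an evaluation point $a$ tailored to $\bm x$: each confusable rules out at most $\deg\approx\log n$ values of $a$, so one needs $m$ of order $n^2\log n$, i.e.\ roughly $2\log n$ bits, and the description of the hash costs comparably much. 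The paper's version of this step picks, by the divisor-counting argument of \cite{Sima20-1}, an integer $\alpha(\bm x)\le 2^{\log|\mathcal N_t(\bm x)|+o(\log n)}$ dividing none of the differences $|\bar h(\bm x)-\bar h(\bm x')|$, and outputs $h(\bm x)=(\alpha(\bm x),\bar h(\bm x)\bmod\alpha(\bm x))$; each component costs $2\log n+o(\log n)$ bits, which is the true source of the $4\log_q n+o(\log_q n)$ symbols --- not a ``locator plus repeated hash'' as you guessed, and not any $\log n$-bit location syndrome (location information appears only later, in Construction 1, not in this lemma).
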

\begin{proof}
The function $h$ can be constructed by the syndrome compression
technique developed in Section II of \cite{Sima20-1}.

For each $\bm x\in\Sigma_q^n$, let $\mathcal N_t(\bm x)$ be the
set of all $\bm x'\in\Sigma_q^n\backslash\{\bm x\}$ such that
$\mathcal B_{\leq t}(\bm x)\cap\mathcal B_{\leq t}(\bm
x')\neq\emptyset$. By simple counting, we have
\begin{align}\label{num-nghb}|\mathcal N_t(\bm x)|\leq
tn^2q^{t}.\end{align}

We first construct a function $\bar
h:\Sigma_q^n\rightarrow[0,2^{\overline{R}}-1]$ such that 1)
$\overline{R}=\frac{t(t+1)}{2}\left(\log(n+1)+\log q\right)$; and
2) $\bar h(\bm x)\neq \bar h(\bm x')$ for all $\bm x\in\Sigma_q^n$
and $\bm x'\in\mathcal N_t(\bm x)$. Specifically, $\bar h$ is
constructed as follows: For each $t'\in[t]$ and $j\in[t']$, let
$$\bar h_{t',j}(\bm x)
=\big(\mathsf{VT}(\phi(x_{I_{t',j}})_{[2,n_{t',j}]}),
\mathsf{Sum}(x_{I_{t',j}})\big),$$ where $I_{t',j}=\{\ell\in[n]:
\ell\equiv j\mod t'\}$ and $n_{t',j}=|I_{t',j}|$. Then let
$$\bar h=\big(\bar h_{1,1},\bar h_{2,1},\bar h_{2,2},\cdots,\bar
h_{t,1},h_{t,2},\cdots,\bar h_{t,t}\big)$$ and view $\bar{h}(\bm
x)$ as the binary representation of a nonnegative integer.
Clearly, $|I_{t',j}|\leq\lceil\frac{n}{t'}\rceil$ and so the
length $|\bar{h}(\bm x)|$ of $\bar{h}(\bm x)$ satisfies
\begin{align*}|\bar{h}(\bm
x)|&=\log\left(\prod_{t'=1}^t\prod_{j=1}^{t'}
q\left(n_{t',j}+1\right)\right)\\
&\leq\log\left(\prod_{t'=1}^t
\left(\left\lceil\frac{n}{t'}\right\rceil+1\right)^{t'}q^{t'}\right)\\
&\leq\frac{t(t+1)}{2}\left(\log(n+1)+\log
q\right)\\&=\overline{R}.\end{align*} Hence, we have $\bar{h}(\bm
x)\in[0,2^{\overline{R}}-1]$ for all $\bm x\in\Sigma_q^n$.
Moreover, for each $t'\in[t]$, if $\bm y\in\mathcal B_{t'}(\bm
x)$, then we have $y_{I'_{t',j}}\in\mathcal D_1(x_{I_{t',j}})$ for
each $j\in[t']$, where $I'_{t',j}=\{\ell\in[n-t']: \ell\equiv
j~(\text{mod}~t')\}$. By Lemma \ref{q-sgl-del}, $x_{I_{t',j}}$ can
be recovered from $\bar h_{t',j}(\bm x)$ and $y_{I'_{t',j}}$, and
so $\bm x$ can be recovered from $\bm y$ and $\bar h(\bm x)$.
Equivalently, if $\bm x'\in\mathcal N_t(\bm x)$, then $\bar h(\bm
x)\neq \bar h(\bm x')$.

For each $\bm x\in\Sigma_q^n$, let $\mathcal P(\bm x)$ be the set
of all positive integers $j$ such that $j$ is a divisor of $|\bar
h(\bm x)-\bar h(\bm x')|$ for some $\bm x'\in\mathcal N_t(\bm x)$.
By the same discussions as in the proof of \cite[Lemma
4]{Sima20-1}, we can obtain $|\mathcal P(\bm x)|\leq
2^{\log|\mathcal N_t(\bm x)|+o(\log n)}\leq O(q^tn^3)$. $($Note
that $q$ and $t$ are assumed to be constant with respect to $n$
and, by \eqref{num-nghb}, $|\mathcal N_t(\bm x)|\leq tn^2q^{t}.)$
So, by brute force search, one can find, in time $2^{\log|\mathcal
N_t(\bm x)|+o(\log n)}\leq O(q^tn^3)$, a positive integer
$\alpha(\bm x)\leq 2^{\log|\mathcal N_t(\bm x)|+o(\log n)}$ such
that $\alpha(\bm x)\notin\mathcal P(\bm x)$. Let $h(\bm
x)=(\alpha(\bm x),\bar h(\bm x)\!\mod \alpha(\bm x))$. Then we
have $h(\bm x)\neq h(\bm x')$ for all $\bm x'\in\mathcal N_t(\bm
x)$. Equivalently, given $h(\bm x)$ and any $\bm y\in\mathcal
B_{\leq t}(\bm x)$, one can uniquely recover $\bm x$.

Moreover, since $\alpha(\bm x)\leq 2^{\log|\mathcal N_t(\bm
x)|+o(\log n)}$ is a positive integer and by \eqref{num-nghb},
$|\mathcal N_t(\bm x)|\leq tn^2q^{t}$, so viewed as a $q$-ary
sequence, we have $h(\bm x)\in\Sigma_q^{4\log_q n+o(\log_q n)}$,
which completes the proof.
\end{proof}

\section{Pattern Dense Sequences}

The concept of $(\bm p,\delta)$-dense sequences was introduced in
\cite{Lenz20} and was used to construct binary codes with
redundancy $\log n+\frac{t(t+1)}{2}\log\log n+c_t$ for correcting
a burst of at most $t$ deletions, where $n$ is the message length
and $c_t$ is a constant only depending on $t$. In this section, we
generalize the $(\bm p,\delta)$-density to $q$-ary sequences and
derive some important properties for these sequences that will be
used in our new construction in the next section.

The $q$-ary $(\bm p,\delta)$-dense sequences can be defined
similar to binary $(\bm p,\delta)$-dense sequences as follows.
\begin{defn}\label{def-density}
Let $d\leq\delta\leq n$ be three positive integers and $\bm
p\in\Sigma_q^d$ called a \emph{pattern}. A sequence $\bm
x\in\Sigma_q^n$ is said to be $(\bm p,\delta)$-\emph{dense} if
each substring of $\bm x$ of length $\delta$ contains at least one
$\bm p$. The indicator vector of $\bm x$ with respect to $\bm p$
is a vector
$$\mathbbm 1_{\bm p}(\bm x)=\big(\mathbbm 1_{\bm p}(\bm
x)_1,\mathbbm 1_{\bm p}(\bm x)_2,\ldots,\mathbbm 1_{\bm p}(\bm
x)_n\big)\in\Sigma_2^n$$ such that for each $i\in[n]$, $\mathbbm
1_{\bm p}(\bm x)_i=1$ if $x_{[i,i+d-1]}=\bm p$, and $\mathbbm
1_{\bm p}(\bm x)_i=0$ otherwise.
\end{defn}

In this work, we will always let $(d=2t)$ $$\bm p=0^t1^t$$ and
view $\bm p=0^t1^t\in\Sigma_q^{2t}$ for any $q\geq 2$. Moreover,
from Definition \ref{def-density}, we have the following simple
remark.
\begin{rem}\label{rem-dense}
Each sequence $\bm x\in\Sigma_q^n$ can be written as the form $\bm
x=\bm x_0\bm p\bm x_1\bm p\bm x_2\bm p\cdots\bm x_{m-1}\bm p\bm
x_{m}$, where each $\bm x_{i}$, $i\in[0,m]$, is a (possibly empty)
string that does not contain $\bm p$. Moreover, $\bm x$ is $(\bm
p,\delta)$-dense if and only if it satisfies: (1) the lengths of
$\bm x_{0}$ and $\bm x_{m}$ are not greater than $\delta-2t$; (2)
the length of each $\bm x_{i}$, $i\in[1,m-1]$, is not greater than
$\delta+1-4t$.
\end{rem}

In \cite{Lenz20}, the VT syndrome of $\text{a}_{\bm p}(\bm x)$ was
used to bound the location of deletions for $(\bm p,
\delta)$-dense $\bm x$, where $\text{a}_{\bm p}(\bm x)$ is a
vector of length $n_{\bm p}(\bm x)+1$ whose $i$-th entry is the
distance between positions of the $i$-th and $(i+1)$-st $1$ in the
string $(1,\mathbbm 1_{\bm p}(\bm x),1)$ and $n_{\bm p}(\bm x)$ is
the number of $1$s in $\mathbbm 1_{\bm p}(\bm x)$. In this paper,
we prove that the VT syndrome of $\mathbbm 1_{\bm p}(\bm x)$ plays
the same role. Specifically, for each $\bm x\in\Sigma_q^n$, let
\begin{align}\label{eq-def-a0}
a_0(\bm x)=\sum_{i=1}^n\mathbbm 1_{\bm p}(\bm x)_i\end{align} and
\begin{align}\label{eq-def-a1}
a_1(\bm x)=\sum_{i=1}^ni\!\cdot\!\mathbbm 1_{\bm p}(\bm
x)_i\end{align} where $\mathbbm 1_{\bm p}(\bm x)$ is the indicator
vector of $\bm x$ with respect to $\bm p$ as defined in Definition
\ref{def-density}. Then we have the following lemma.

\begin{lem}\label{lem-Bnry-burst-Lenz}
Suppose $\bm x\in\Sigma_q^n$ is $(\bm p, \delta)$-dense. For any
$t'\in[t]$ and any $\bm y\in\mathcal B_{t'}(\bm x)$, given
$a_0(\bm x)~(\text{mod}~4)$ and $a_1(\bm x)~(\text{mod}~2n)$, one
can find, in time $O(n)$, an interval $L\subseteq[n]$ of length at
most $3\delta$ such that $\bm y=x_{[n]\backslash D}$ for some
interval $D\subseteq L$ of size $|D|=t'=|\bm x|-|\bm
y|$.\footnote{In fact, we can require that the length of $L$ is at
most $\delta$. However, the proof needs more careful discussions.}
\end{lem}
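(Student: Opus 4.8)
The plan is to track how a burst of $t'$ deletions affects the indicator vector $\mathbbm 1_{\bm p}(\bm x)$, and then argue that knowing $a_0(\bm x) \bmod 4$ and $a_1(\bm x) \bmod 2n$ pins down, up to a bounded ambiguity, the position of the burst relative to the ``skeleton'' of occurrences of $\bm p = 0^t1^t$. First I would write $\bm x = \bm x_0 \bm p \bm x_1 \bm p \cdots \bm x_{m-1} \bm p \bm x_m$ as in Remark \ref{rem-dense}, so that each $\bm x_i$ is short ($|\bm x_0|,|\bm x_m| \le \delta - 2t$ and $|\bm x_i| \le \delta+1-4t$ for interior $i$). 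A burst $D = [j, j+t'-1]$ of length $t' \le t$ intersects only a bounded window of this decomposition: since $|\bm p| = 2t \ge 2t'$, the deleted interval can destroy at most two consecutive occurrences of $\bm p$ in the skeleton and can touch at most the three substrings $\bm x_i$ surrounding them. Consequently, comparing $\mathbbm 1_{\bm p}(\bm y)$ to $\mathbbm 1_{\bm p}(\bm x)$, the indicator vector changes only in a contiguous window whose length is bounded by roughly $2\delta$ (two short gaps plus the deleted segment), and to the left and right of that window the two indicator vectors agree after a shift by $t'$.

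Next I would quantify the effect on $a_0$ and $a_1$. The quantity $a_0(\bm x) = \sum_i \mathbbm 1_{\bm p}(\bm x)_i$ counts occurrences of $\bm p$; the burst can change this count by at most $2$ (it can remove at most two occurrences and, because the boundary substrings may fuse, create at most a bounded number — in fact at most two — new ones). I would check that the net change $a_0(\bm x) - a_0(\bm y)$ lies in a set of size at most $4$, so that $a_0(\bm x) \bmod 4$ determines $a_0(\bm y) - a_0(\bm x) \bmod 4$, hence the exact change (since it lies in, say, $\{-2,-1,0,1\}$ or a similar window of width $4$). Then I would examine $a_1$: deleting the interval $D$ shifts every position to the right of $D$ leftward by $t'$, so each surviving $1$ of the indicator at a position $\ell > j+t'-1$ contributes $\ell - t'$ instead of $\ell$ to $a_1(\bm y)$; the $1$'s strictly to the left of $D$ are unaffected; and the $1$'s inside the affected window are replaced by a new local pattern. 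Writing $a_1(\bm x) - a_1(\bm y) = t'\cdot(\text{number of surviving }\bm p\text{-occurrences to the right of the window}) + (\text{local correction term})$, and observing that the local correction term is bounded in absolute value by $O(\delta \cdot t)$ — a quantity that is $o(n)$ and in particular less than $n$ for $\delta$ in the regime of interest — I would argue that $a_1(\bm x) \bmod 2n$, together with the already-recovered $a_0(\bm x)$ and the observed $a_1(\bm y)$, determines the number of $\bm p$-occurrences lying to the right of the affected window, and this in turn localizes that window inside $\mathbbm 1_{\bm p}(\bm x)$ up to an interval of the claimed length. From the position of the window in the skeleton of $\bm x$ one reads off an interval $L \subseteq [n]$ of length at most $3\delta$ that must contain the deletion interval $D$; the factor $3$ absorbs the two flanking short gaps and the length-$t'$ burst itself.

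The main obstacle I expect is the careful bookkeeping in the ``local correction term'' for $a_1$ and, relatedly, the precise bound on how many occurrences of $\bm p$ can be destroyed or created when two short non-$\bm p$-containing substrings are fused across the deleted region. One must rule out pathological cascades — e.g. the deletion causing $\bm p$ to suddenly appear at many places — and here the specific choice $\bm p = 0^t 1^t$ helps: an occurrence of $0^t1^t$ requires a run of at least $t$ zeros immediately followed by a run of at least $t$ ones, and a single burst of length $\le t$ can create at most a constant number of such boundary configurations. I would make this precise by a short case analysis on which substrings $\bm x_{i-1}, \bm x_i, \bm x_{i+1}$ and which flanking copies of $\bm p$ the interval $D$ meets, checking in each case that the change in $a_0$ stays within a window of width $4$ and the local contribution to $a_1$ stays within $(-n, n)$ after accounting for the $t'\cdot(\#\text{right occurrences})$ term modulo $2n$. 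The length bound on $L$ then follows from Remark \ref{rem-dense}: three consecutive skeleton gaps plus the intervening copies of $\bm p$ have total length at most $3\delta$, which is why the stated (slightly loose) bound of $3\delta$ is comfortable, whereas the sharper bound of $\delta$ alluded to in the footnote would require squeezing this analysis down to a single gap.
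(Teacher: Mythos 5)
Your overall strategy (skeleton decomposition from Remark \ref{rem-dense}, recovering the change in the number of $\bm p$-occurrences from $a_0 \bmod 4$, then using $a_1 \bmod 2n$ to localize the burst) matches the paper's, but the central quantitative step has a genuine gap. You write $a_1(\bm x)-a_1(\bm y)=t'\cdot(\#\text{occurrences right of the window})+(\text{local correction})$ and claim the correction is $O(\delta t)$. This is only true when the burst leaves the occurrence count inside the affected window unchanged ($\Delta_0=0$). When the burst destroys one or two copies of $\bm p$ (or creates one), the indicator vectors $\mathbbm 1_{\bm p}(\bm x)$ and $\mathbbm 1_{\bm p}(\bm y)$ differ by substitutions, and the substituted $1$s contribute their \emph{absolute} positions to $a_1$; e.g.\ in the two-destroyed case $a_1(\bm x)=a_1(\bm y)+\lambda_1+\lambda_2+(m'-i_{\text{d}})t'$ with $\lambda_1,\lambda_2$ as large as order $n$. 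So the "correction'' is not $o(n)$, your formula does not determine the right-occurrence count, and the argument collapses exactly in the cases $\Delta_0\in\{-1,1,2\}$. Moreover, even in the benign case, knowing the count only up to an additive error $e$ localizes the burst to $\Theta(e)$ candidate gaps, i.e.\ an interval of length $\Theta(e\delta)$, not $3\delta$; what is actually needed (and what the paper does case by case) is to define, for each candidate index $i$, a predicted value $\xi_{\Delta_0}(i)$ of $a_1(\bm x)$ anchored at the skeleton coordinates $u_i$ or $v_i$, prove these candidates are strictly monotone (using the spacing $u_{i+1}-v_i\geq 2t>t'$) and all lie in a window of width $<2n$ around $a_1(\bm y)$, and then identify $i_{\text{d}}$ by a sandwich inequality $\xi(i_{\text{d}})\leq a_1(\bm x)<\xi(i_{\text{d}}+1)$ read off modulo $2n$. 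This anchoring-plus-monotonicity mechanism is the real content of the lemma and is absent from your proposal.

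A secondary but consequential slip: you allow the burst to create "at most two'' new occurrences of $\bm p$. Since occurrences of $0^t1^t$ cannot overlap and any newly created occurrence must straddle the single junction left by the burst, at most \emph{one} can be created, so $\Delta_0=a_0(\bm x)-a_0(\bm y)\in\{-1,0,1,2\}$, which is precisely why four residues suffice. With your bound of two creations the range of $\Delta_0$ would have five values and your own claim that $a_0(\bm x)\bmod 4$ pins down the exact change would fail; so this point needs the non-overlap argument, not just "a constant number of boundary configurations.''
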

\begin{proof}
Let $a_0(\bm x)=m$ and $a_0(\bm y)=m'$. Then by Remark
\ref{rem-dense}, $\bm x$ and $\bm y$ can be written as the
following form:
$$\bm x=\bm x_00^t1^t\bm x_10^t1^t\bm x_2\cdots 0^t1^t\bm
x_{m-1}0^t1^t\bm x_m$$ and $$\bm y=\bm y_00^t1^t\bm y_10^t1^t\bm
y_2\cdots 0^t1^t\bm y_{m'-1}0^t1^t\bm y_{m'}$$ where $\bm x_i$ and
$\bm y_j$ do not contain $\bm p=0^t1^t$ for each $i\in[0,m]$ and
$j\in[0,m']$. We denote $$u_i=|\bm y_00^t1^t\bm y_10^t1^t\cdots\bm
y_{i-1}0^t1^t|,~\forall i\in[1,m']$$ and
$$v_i=|\bm y_00^t1^t\bm y_10^t1^t\cdots\bm y_i|,~\forall i\in[0,m'].$$
Additionally, let $u_{0}=0$. Clearly, for each $i\in[0,m']$, we
have $u_i\leq v_i$ and $\bm y_i=y_{[u_i+1,v_i]}$. Moreover, for
each $i\in[0,m'-1]$, each $j_i\in[u_i,v_i]$ and
$j_{i+1}\in[u_{i+1},v_{i+1}]$, we have
\begin{align}\label{dst-dft-intvl}
j_{i+1}-j_i\geq u_{i+1}-v_i\geq 2t.\end{align}

Note that a burst of $t'\leq t$ deletions may destroy at most two
$\bm p$s or create at most one $\bm p$, so $\Delta_0\triangleq
m-m'\in\{-1,0,1,2\}$ and $\Delta_0$ can be computed from $a_0(\bm
x)-a_0(\bm y)$. We need to consider the following four cases
according to $\Delta_0$.

Case 1: $\Delta_0=2$. Then $m'=m-2$ and there is an
$i_{\text{d}}\in[0,m']$ such that $|\bm x_{i_{\text{d}}+1}|\leq
t'-2$ and $\bm y$ can be obtained from $\bm x$ by deleting a
substring $1^{t_1}\bm x_{i_{\text{d}}+1}0^{t_0}$ for some
$t_0,t_1>0$ such that $|\bm x_{i_{\text{d}}+1}|+t_0+t_1=t'$. More
specifically, $\bm y_{i_{\text{d}}}=\bm
x_{i_{\text{d}}}0^t1^{t-t_1}0^{t-t_0}1^t\bm x_{i_{\text{d}}+2}$.
Clearly, we have $2\leq t'\leq t$ and
$x_{[u_{i_{\text{d}}}+1,v_{i_{\text{d}}}+t']}=\bm
x_{i_{\text{d}}}0^t1^t\bm x_{i_{\text{d}}+1}0^t1^t\bm
x_{i_{\text{d}}+2}$. It is sufficient to let
$L=[u_{i_{\text{d}}}+1,v_{i_{\text{d}}}+t']$. But we still need to
find $i_{\text{d}}$.

Consider $\mathbbm 1_{\bm p}(\bm x)$ and $\mathbbm 1_{\bm p}(\bm
y)$. By Definition \ref{def-density}, $\mathbbm 1_{\bm p}(\bm x)$
can be obtained from $\mathbbm 1_{\bm p}(\bm y)$ by $t'$
insertions and two substitutions in the substring $\mathbbm 1_{\bm
p}(\bm y)_{[u_{i_{\text{d}}}+1,v_{i_{\text{d}}}]}$: inserting $t'$
$0$s and substituting two $0$s by two $1$s. Then by
\eqref{eq-def-a1}, we can obtain
\begin{align}\label{eq-2-a1-x}a_1(\bm x)=a_1(\bm
y)+\lambda_1(i_{\text{d}})
+\lambda_2(i_{\text{d}})+(m'-i_{\text{d}})t'\end{align} where
$\lambda_1(i_{\text{d}}),\lambda_2(i_{\text{d}})
\in[u_{i_{\text{d}}}+1,v_{i_{\text{d}}}+t']$ are the locations of
the two substitutions. To find $i_{\text{d}}$, we define a
function $\xi_2$ as follows: For every $i\in[0,m']$, let
$$\xi_2(i)=a_1(\bm y)+2(u_i+1)+(m'-i)t'.$$ Then for each
$i\in[0,m'-1]$, we can obtain
$\xi_2(i+1)-\xi_2(i)=2(u_{i+1}-u_{i})-t'\geq 4t-t'>0$, where the
first inequality comes from \eqref{dst-dft-intvl}. So, for each
$i\in[0,m'-1]$, we have
\begin{align}\label{eq-xi-2-1}
a_1(\bm y)<\xi_2(i)<\xi_2(i+1)\leq\xi_2(m')<a_1(\bm
y)+2n,\end{align} where the last inequality comes from the simple
observation that $\xi_2(m')=a_1(\bm y)+2(u_{m'}+1)<a_1(\bm y)+2n$.

By definition of $\xi_2$ and $a_1$, we can obtain
\begin{align*}\xi_2(i_{\text{d}}+1)-a_1(\bm
x)&=2(u_{i_{\text{d}}+1}+1)-\lambda_1(i_{\text{d}})
-\lambda_2(i_{\text{d}})-t'\\
&\stackrel{(\text{i})}{\geq} 2u_{i_{\text{d}}+1}+2
-2(v_{i_{\text{d}}}+t')-t'\\
&\stackrel{(\text{ii})}{\geq} 4t+2-3t'\\&>0\end{align*} where (i)
holds because $\lambda_1(i_{\text{d}}),\lambda_2(i_{\text{d}})
\in[u_{i_{\text{d}}}+1,v_{i_{\text{d}}}+t']$, and (ii) is obtained
from \eqref{dst-dft-intvl}. On the other hand, by
\eqref{eq-2-a1-x}, $a_1(\bm x)-\xi_2(i_{\text{d}})
=\lambda_1(i_{\text{d}})+\lambda_2(i_{\text{d}})
-2(u_{i_{\text{d}}}+1)\geq 0~($noticing that
$\lambda_1(i_{\text{d}}),\lambda_2(i_{\text{d}})
\in[u_{i_{\text{d}}}+1,v_{i_{\text{d}}}+t'])$. Hence, we can
obtain
\begin{align}\label{eq-xi-2-2}
\xi_2(i_{\text{d}})\leq a_1(\bm
x)<\xi_2(i_{\text{d}}+1).\end{align}

By \eqref{eq-xi-2-1} and \eqref{eq-xi-2-2}, $i_{\text{d}}$ and $L$
can be found as follows: Compute
$$\mu\triangleq a_1(\bm
x)~(\text{mod}~2n)-a_1(\bm y)~(\text{mod}~2n)$$ and
$$\mu_i\triangleq\xi_2(i)~(\text{mod}~2n)-a_1(\bm
y)~(\text{mod}~2n)$$ for $i$ from $0$ to $m'$. Then we can find an
$i_{\text{d}}\in[0,m']$ such that
$\mu_{i_{\text{d}}}\leq\mu<\mu_{i_{\text{d}}+1}$, where
$\mu_{m'+1}=2n$. Let $L=[u_{i_{\text{d}}}+1,v_{i_{\text{d}}}+t']$.
Note that $x_{[u_{i_{\text{d}}}+1,v_{i_{\text{d}}}+t']}=\bm
x_{i_{\text{d}}}0^t1^t\bm x_{i_{\text{d}}+1}0^t1^t\bm
x_{i_{\text{d}}+2}$ and $\bm x$ is $(\bm p,\delta)$-dense, so by
Remark \ref{rem-dense}, the length of $L$ satisfies $|L|=|\bm
x_{i_{\text{d}}}0^t1^t\bm x_{i_{\text{d}}+1}0^t1^t\bm
x_{i_{\text{d}}+2}|\leq 3(\delta+1-4t)+4t\leq 3\delta$, where the
last inequality holds because $2\leq t'\leq t$.

Case 2: $\Delta_0=1$. Then $m'=m-1$ and, similar to Case 1, there
is an $i_{\text{d}}\in[0,m']$ such that $\bm y_{i_{\text{d}}}$ can
be obtained from $\bm x_{i_{\text{d}}}0^t1^t\bm
x_{i_{\text{d}}+1}$ by deleting $t'$ symbols and the pattern
$0^t1^t$ is destroyed. Clearly,
$x_{[u_{i_{\text{d}}}+1,v_{i_{\text{d}}}+t']}=\bm
x_{i_{\text{d}}}0^t1^t\bm x_{i_{\text{d}}+1}$ and it is sufficient
to let $L=[u_{i_{\text{d}}}+1,v_{i_{\text{d}}}+t']$. To find
$i_{\text{d}}$, consider $\mathbbm 1_{\bm p}(\bm y)$ and $\mathbbm
1_{\bm p}(\bm x)$. By Definition \ref{def-density}, $\mathbbm
1_{\bm p}(\bm x)$ can be obtained from $\mathbbm 1_{\bm p}(\bm y)$
by $t'$ insertions and one substitution in the substring $\mathbbm
1_{\bm p}(\bm y)_{[u_{i_{\text{d}}}+1,v_{i_{\text{d}}}]}$:
inserting $t'$ $0$s and substituting a $0$ by a $1$. By
\eqref{eq-def-a1}, we can obtain
\begin{align}\label{eq-1-a1-x}a_1(\bm x)=a_1(\bm
y)+\lambda(i_{\text{d}})+(m'-i_{\text{d}})t'\end{align} where
$\lambda(i_{\text{d}})
\in[u_{i_{\text{d}}}+1,v_{i_{\text{d}}}+t']$ is the location of
the substitution. For every $i\in[0,m']$, let $$\xi_1(i)=a_1(\bm
y)+(u_i+1)+(m'-i)t'.$$ Then for each $i\in[0,m'-1]$, we have
$\xi_1(i+1)-\xi_1(i)=u_{i+1}-u_{i}-t'\geq 2t-t'>0$, and so we can
further obtain
\begin{align}\label{eq-xi-1-1}
a_1(\bm y)<\xi_1(i)<\xi_1(i+1)\leq\xi_1(m')\leq a_1(\bm
y)+n.\end{align}

By definition of $\xi_1$ and $a_1$, we can obtain
$\xi_1(i_{\text{d}}+1)-a_1(\bm
x)=u_{i_{\text{d}}+1}+1-\lambda(i_{\text{d}})-t'
>u_{i_{\text{d}}+1}+1-(v_i+t')-t'\geq 2t+1-2t'>0$.
On the other hand, by \eqref{eq-1-a1-x}, $a_1(\bm
x)-\xi_1(i_{\text{d}})=\lambda(i_{\text{d}})
-(u_{i_{\text{d}}}+1)\geq 0$. Hence, we can obtain
\begin{align}\label{eq-xi-1-2}
\xi_1(i_{\text{d}})\leq a_1(\bm
x)<\xi_1(i_{\text{d}}+1).\end{align}

By \eqref{eq-xi-1-1} and \eqref{eq-xi-1-2}, $L$ can be found as
follows: Compute
$$\mu\triangleq a_1(\bm x)~(\text{mod}~2n)-a_1(\bm
y)~(\text{mod}~2n)$$ and
$$\mu_i\triangleq\xi_1(i)~(\text{mod}~2n)-a_1(\bm y)~(\text{mod}~2n)$$
for $i$ from $0$ to $m'$. Let $i_{\text{d}}\in[0,m']$ be such that
$\mu_{i_{\text{d}}}\leq\mu<\mu_{i_{\text{d}}+1}$. Then let
$L=[u_{i_{\text{d}}}+1,v_{i_{\text{d}}}+t']$, where
$\mu_{m'+1}=2n$. Note that
$x_{[u_{i_{\text{d}}}+1,v_{i_{\text{d}}}+t']}=\bm
x_{i_{\text{d}}}0^t1^t\bm x_{i_{\text{d}}+1}$ and $\bm x$ is $(\bm
p,\delta)$-dense, so by Remark \ref{rem-dense}, $|L|=|\bm
x_{i_{\text{d}}}0^t1^t\bm x_{i_{\text{d}}+1}|\leq
2(\delta+1-4t)+2t<2\delta$.

Case 3: $\Delta_0=0$. Then $m'=m$. For every $i\in[0,m]$, let
$$\xi_{0}(i)=a_1(\bm y)+(m-i)t'.$$ Note that $\bm x$ contains $m$
copies of $0^t1^t$, so we have $n\geq 2tm>mt'$. Therefore, for
each $i\in[0,m-1]$, we can obtain
\begin{align}\label{eq-0-a1-x-0}a_1(\bm y)+n>a_1(\bm
y)+mt'\geq\xi_{0}(i)>\xi_{0}(i+1)\geq a_1(\bm y).\end{align}

As $\Delta_0=0$, there are two ways to obtain $\bm y$ from $\bm
x$:
\begin{itemize}
 \item[1)] There is an $i_{\text{d}}\in[0,m]$ such that
 $\bm y_{i_{\text{d}}}$ can be obtained from $\bm x_{i_{\text{d}}}$
 by a burst of $t'$ deletions. Correspondingly, by Definition
 \ref{def-density}, $\mathbbm 1_{\bm p}(\bm x)$ can be obtained
 from $\mathbbm 1_{\bm p}(\bm y)$ by inserting $t'$ $0$s into
 $\mathbbm 1_{\bm p}(\bm
 y)_{[u_{i_{\text{d}}}+1,v_{i_{\text{d}}}]}$. Therefore, we have
 \begin{align}\label{eq-0-a1-x-1}a_1(\bm x)=a_1(\bm
 y)+(m-i_{\text{d}})t'=\xi_{0}(i_{\text{d}}).\end{align}
 \item[2)] There is an $i_{\text{d}}\in[0,m-1]$ such that
 $\bm x_{i_{\text{d}}}0^t1^t\bm x_{i_{\text{d}}+1}
 =\bm y_{i_{\text{d}}}0^{t+t_0}1^{t+t_1}\bm y_{i_{\text{d}}+1}$
 for some $t_0,t_1\in[1,t'-1]$ such that $t_0+t_1=t'$, and
 $\bm y_{i_{\text{d}}}0^{t}1^{t}\bm y_{i_{\text{d}}+1}$ is obtained
 from $\bm x_{i_{\text{d}}}0^t1^t\bm x_{i_{\text{d}}+1}$ by
 deleting the substring $0^{t_0}1^{t_1}$.
 By Definition \ref{def-density},
 $\mathbbm 1_{\bm p}(\bm x)$ can be obtained from $\mathbbm 1_{\bm
 p}(\bm y)$ by inserting $t_0$ $0$s in $\mathbbm 1_{\bm p}(\bm
 y)_{[u_{i_{\text{d}}}+1,v_{i_{\text{d}}}]}$ and $t_1$ $0$s in
 $\mathbbm 1_{\bm p}(\bm
 y)_{[v_{i_{\text{d}}}+2,v_{i_{\text{d}}+2t}]}$. Therefore, we have
 $$a_1(\bm x)=a_1(\bm y)+t_0+(m-i_{\text{d}}-1)t'.$$ By definition of
 $\xi_{0}$, we
 have $\xi_{0}(i_{\text{d}})-a_1(\bm x)=t'-t_0>0$ and $a_1(\bm
 x)-\xi_{0}(i_{\text{d}}+1)=t_0>0$. So, we can obtain
 \begin{align}\label{eq-0-a1-x-2}
 \xi_{0}(i_{\text{d}})>a_1(\bm
 x)>\xi_{0}(i_{\text{d}}+1)\end{align}
\end{itemize}
For both cases, if $i_{\text{d}}\in[0,m-1]$, then we can
$L=[u_{i_{\text{d}}}+1,v_{i_{\text{d}}}+2t+t']$; if
$i_{\text{d}}=m$, then we can let $L=[u_m+1,n]$. Note that
$x_{[u_{i_{\text{d}}}+1,v_{i_{\text{d}}}+2t+t']}=\bm
x_{i_{\text{d}}}0^t1^t$ and $x_{[u_m+1,n]}=\bm x_{m}$, and since
$\bm x$ is $(\bm p,\delta)$-dense, then by Remark \ref{rem-dense},
we have $|L|=|\bm x_{i_{\text{d}}}0^t1^t|\leq 2\delta$ or
$|L|=|\bm x_{m}|\leq 2\delta$. Moreover, by \eqref{eq-0-a1-x-0},
\eqref{eq-0-a1-x-1} and \eqref{eq-0-a1-x-2}, $i_{\text{d}}~($and
so $L)$ can be found as follows: Compute
$$\mu\triangleq a_1(\bm x)~(\text{mod}~2n)-a_1(\bm
y)~(\text{mod}~2n)$$ and
$$\mu_i\triangleq\xi_0(i)~(\text{mod}~2n)-a_1(\bm y)~(\text{mod}~2n)$$
for $i$ from $0$ to $m$. Then we can always find an
$i_{\text{d}}\in[0,m]$ such that
$\mu_{i_{\text{d}}}\geq\mu>\mu_{i_{\text{d}}+1}$, which is what we
want.

Case 4: $\Delta_0=-1$. Then $m'=m+1$ and there is an
$i_{\text{d}}\in[0,m'-1]$ such that $\bm x_{i_{\text{d}}}=\bm
y_{i_{\text{d}}}0^{t_0}\bm s0^{t-t_0}1^t\bm y_{i_{\text{d}}+1}$ or
$\bm x_{i_{\text{d}}}=\bm y_{i_{\text{d}}}0^t1^{t_1}\bm
s1^{t-t_1}\bm y_{i_{\text{d}}+1}$, where $t_0\in[1,t]$,
$t_1\in[1,t-1]$ and $\bm s\in\Sigma_q^{t'}$, and $\bm y$ can be
obtained from $\bm x$ by deleting $\bm s$. In this case, we can
let $L=[v_{i_{\text{d}}}+1,v_{i_{\text{d}}}+2t+t']$ and can obtain
$|L|=2t+t'<\delta$. To find $i_{\text{d}}$, we consider $\mathbbm
1_{\bm p}(\bm x)$ and $\mathbbm 1_{\bm p}(\bm y)$. By Definition
\ref{def-density}, $\mathbbm 1_{\bm p}(\bm x)$ can be obtained
from $\mathbbm 1_{\bm p}(\bm y)$ by inserting $t'$ $0$s into
$\mathbbm 1_{\bm p}(\bm
y)_{[v_{i_{\text{d}}+1},v_{i_{\text{d}}}+2t]}$ and substituting
$\mathbbm 1_{\bm p}(\bm y)_{v_{i_{\text{d}}}+1}=1$ by a $0$.
Therefore, we have
\begin{align}\label{eq-n1-a1-y}a_1(\bm x)=a_1(\bm
y)-(v_{i_{\text{d}}}+1)+(m'-1-i_{\text{d}})t'.
\end{align}
For every $i\in[0,m'-1]$, let
$$\xi_{-1}(i)=a_1(\bm y)-(v_i+1)+(m'-1-i)t'.$$
Then for each $i\in[0,m'-2]$, we have
$\xi_{-1}(i)-\xi_{-1}(i+1)=v_{i+1}-v_i-t'>0$, where the inequality
is obtained from \eqref{dst-dft-intvl}. Moreover, we have
$\xi_{-1}(0)=a_1(\bm y)-1+(m'-1)t'<a_1(\bm y)+2tm'<a_1(\bm y)+n$
and $\xi_{-1}(m'-1)=a_1(\bm y)-(v_{m'-1}+1)>a_1(\bm y)-n$. So for
each $i\in[0,m'-2]$, we can obtain
\begin{align}\label{eq-n1-a1-x}
a_1(\bm y)+n>\xi_{-1}(i)>\xi_{-1}(i+1)>a_1(\bm y)-n.\end{align} By
\eqref{eq-n1-a1-y} and by the definition of $\xi_{-1}$, we have
$a_1(\bm x)=\xi_{-1}(i_{\text{d}})$. So, by \eqref{eq-n1-a1-x},
$i_{\text{d}}~($and so $L)$ can be found by the following process:
For $i$ from $0$ to $m'-1$, compute $\xi_{-1}(i)$. Then we can
always find an $i_{\text{d}}\in[0,m'-1]$ such that
$\xi_{-1}(i_{\text{d}})~(\text{mod}~2n)=a_1(\bm
x)~(\text{mod}~2n)$, which is what we want.

Thus, one can always find the expected interval $L\subseteq[n]$.
From the above discussions, it is easy to see that the time
complexity for finding such $L$ is $O(n)$.
\end{proof}

\begin{figure*}[htbp]
\begin{center}
\includegraphics[width=18.0cm]{algthm.1}
\end{center}
\label{fig-algorithm-1}
\end{figure*}

\begin{figure*}[htbp]
\begin{center}
\includegraphics[width=18.0cm]{algthm.2}
\end{center}
\label{fig-algorithm-1}
\end{figure*}

In the rest of this section, we will use the so-called sequence
replacement (SR) technique to construct $q$-ary $(\bm
p,\delta)$-dense strings with only one symbol of redundancy for
$\delta=2tq^{2t}\lceil\log n\rceil$. The SR technique, which has
been widely used in the literature $($e.g., see \cite{Wang-S-22},
\cite{Wijngaarden10}-\cite{Sima23}$)$, is an efficient method for
constructing strings with or without some constraints on their
substrings. In this paper, to apply the SR technique to construct
$(\bm p,\delta)$-dense strings, each length-$\delta$ string that
does not contain $\bm p$ needs to be compressed to a shorter
sequence, which can be realized by the following lemma.

\begin{lem}\label{lem-comprs-nd}
Let $\delta=2tq^{2t}\lceil\log n\rceil$ and $\mathcal
S\subseteq\Sigma_q^{\delta}$ be the set of all sequences of length
$\delta$ that do not contain $\bm p=0^t1^t$. For $n\geq
q^{\frac{6t+3-\log_qe}{0.4}}$, there exists an invertible function
$$g:\mathcal S\rightarrow\Sigma_q^{\delta-\lceil\log_qn\rceil-6t-2}$$
such that $g$ and $g^{-1}$ are computable in time $O(\delta)$.
\end{lem}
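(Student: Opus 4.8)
The plan is to exploit that a string avoiding the fixed length-$2t$ pattern $\bm p=0^t1^t$ occupies only an exponentially small fraction of $\Sigma_q^\delta$, so it admits a shorter description, and to organise that description block by block so that it can be produced and inverted in time $O(\delta)$.

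First I would cut $\bm x\in\mathcal S$ into its $N:=\delta/(2t)=q^{2t}\lceil\log n\rceil$ consecutive length-$2t$ blocks, $\bm x=B_1B_2\cdots B_N$. Since $\bm x$ contains no occurrence of $\bm p$, no block $B_i$ equals $0^t1^t$, so each $B_i$ lies in $\mathcal A:=\Sigma_q^{2t}\setminus\{0^t1^t\}$, a set of size $q^{2t}-1$; thus $\mathcal S$ embeds, through the block decomposition, into $\mathcal A^N$ and
$$|\mathcal S|\le(q^{2t}-1)^N=q^{\delta}\bigl(1-q^{-2t}\bigr)^{N}\le q^{\delta}e^{-Nq^{-2t}}=q^{\delta}e^{-\lceil\log n\rceil}=q^{\,\delta-\lceil\log n\rceil\log_q e}.$$
A short computation, keeping a little care with the ceilings and the expansion $-\ln(1-u)\ge u$, shows that the hypothesis $n\ge q^{(6t+3-\log_q e)/0.4}$ makes the right-hand side at most $q^{R}$ with $R:=\delta-\lceil\log_q n\rceil-6t-2$. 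Hence an injection $\mathcal S\hookrightarrow\Sigma_q^{R}$ exists; what remains is to produce one that, together with its inverse, runs in time $O(\delta)$.

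To that end I would not rank all $N$ blocks at once (that would force arithmetic on integers of $\Theta(\delta)$ digits, i.e.\ $\Theta(\delta^{2})$ bit operations), but instead fix a group size equal to a constant $k=k(q,t)$ chosen large enough that $(q^{2t}-1)^k\le q^{2tk-1}$; such a $k$ exists because $2t-\log_q(q^{2t}-1)=-\log_q(1-q^{-2t})>0$, and one takes $k$ as large as convenient. The map $g$ then processes $\bm x$ in groups of $k$ consecutive blocks: a group occupies $2tk$ symbols and lies in $\mathcal A^k$, so it is encoded as the base-$q$ expansion, padded to exactly $2tk-1$ digits, of the integer $\sum_{j=1}^{k}\mathrm{rank}(B_j)\,(q^{2t}-1)^{\,j-1}\in\{0,1,\ldots,(q^{2t}-1)^k-1\}$; the fewer than $k$ leftover blocks at the end are copied verbatim; finally the output is padded with $0$'s to length exactly $R$. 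This $g$ is injective, and since $n$ (hence $N$, $k$, and the number $\lfloor N/k\rfloor$ of groups) is fixed, $g^{-1}$ just strips the known number of padding symbols and then undoes each group by reading $2tk-1$ base-$q$ digits back to an integer and the integer back to $k$ blocks. The ``real'' part of the output has length $(2tk-1)\lfloor N/k\rfloor+2t\,(N\bmod k)=\delta-\lfloor N/k\rfloor$, so the padding length is $\lfloor N/k\rfloor-\lceil\log_q n\rceil-6t-2$, which is nonnegative under the stated lower bound on $n$ once $k$ is chosen large enough (the same computation as above). Each group is handled with $O(1)$ arithmetic on numbers of $O(k\log q)=O(1)$ digits, there are $O(\log n)$ groups and $O(\log n)$ leftover and padding symbols, and $\delta=\Theta(\log n)$, so $g$ and $g^{-1}$ run in $O(\delta)$ time.

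The only genuinely delicate points are these two. First, keeping the running time at $O(\delta)$ rather than the $\Theta(\delta^{2})$ (or $\widetilde O(\delta)$ with fast multiplication) of a single global enumeration is exactly what forces the constant-size grouping, and it is then necessary to check that grouping still loses enough symbols per group, in aggregate, to reach the target length $R$. Second, pinning down the constants in the threshold for $n$ is mechanical but fiddly, since it must track both the ceilings $\lceil\log n\rceil$, $\lceil\log_q n\rceil$ and the slack in $1-u\le e^{-u}$.
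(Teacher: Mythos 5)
Your counting step is exactly the paper's: cut $\bm x\in\mathcal S$ into the $N=q^{2t}\lceil\log n\rceil$ aligned blocks of length $2t$, observe none of them equals $\bm p$, bound $|\mathcal S|\le(q^{2t}-1)^N$ and compare with $q^{\delta-\lceil\log_qn\rceil-6t-2}$. The paper then simply ranks $\mathcal S$ inside $\left(\Sigma_q^{2t}\setminus\{\bm p\}\right)^N$ and asserts that the induced conversion between base $q^{2t}-1$ and base $q$ runs in $O(\delta)$; your skepticism about that assertion (schoolbook conversion of a $\Theta(\delta)$-digit integer is quadratic) is fair, and your constant-size grouping is the natural fix. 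But the grouping is also where your proposal has a genuine gap, in two places. First, ``one takes $k$ as large as convenient'' is backwards: your total saving is $\lfloor N/k\rfloor$ symbols (one symbol per group), which \emph{shrinks} as $k$ grows, so $k$ must be taken essentially minimal subject to $(q^{2t}-1)^k\le q^{2tk-1}$; taken literally, your advice destroys the length bound entirely.

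Second, even with the minimal admissible $k$, saving one symbol per $k$ blocks is strictly lossier than the entropy gap used in your counting display, so the padding-nonnegativity claim is \emph{not} ``the same computation as above,'' and in fact it fails inside the lemma's stated range. Take $q=2$, $t=1$ (the lemma and Algorithm~1 are claimed for all $q\ge2$): the blocks live in $\Sigma_2^2\setminus\{01\}$, the minimal $k$ with $3^k\le 2^{2k-1}$ is $k=3$, and with $m=\lceil\log n\rceil$ you need $\lfloor N/k\rfloor=\lfloor 4m/3\rfloor\ge \lceil\log_q n\rceil+6t+2=m+8$, i.e.\ $m\ge 24$, whereas the hypothesis only guarantees $m\ge 19$; so for $n$ between roughly $2^{19}$ and $2^{24}$ your map does not fit into $\Sigma_q^{\delta-\lceil\log_qn\rceil-6t-2}$. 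Even for $q=3$, $t=1$ the required inequality holds with zero slack at the threshold, so any further loss breaks it. The scheme is repairable: encode each group of $k$ blocks into $\lceil k\log_q(q^{2t}-1)\rceil$ digits with $k$ chosen so that several symbols (not just one) are saved per group, making the rounding loss a negligible fraction of the available gap, and then verify the resulting floor/ceiling inequality against the stated threshold on $n$ (restricting to the paper's regime $q>2$ if needed). That verification is the real content of the time-efficient construction and is missing from your write-up.
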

\begin{proof}
As each $\bm s\in\mathcal S$ has length $\delta=2tq^{2t}\lceil\log
n\rceil$ and does not contain $\bm p$, then $\mathcal S$ can be
viewed as a subset of $\left(\Sigma_q^{2t}\backslash\{\bm
p\}\right)^{q^{2t}\lceil\log n\rceil}$, and we have
\begin{align*}
\log_q|\mathcal
S|&\leq\log_q\left(q^{2t}-1\right)^{q^{2t}\lceil\log n\rceil}\\
&=(2t)q^{2t}\lceil\log n\rceil
+\lceil\log n\rceil\log_q\left(1-\frac{1}{q^{2t}}\right)^{q^{2t}}\\
&\stackrel{(\text{i})}{\leq}(2t)q^{2t}\lceil\log n\rceil+
(\log n+1)\log_q\left(\frac{1}{e}\right)\\
&=\delta-\log_q n\log e-\log_q e\\
&\leq\delta-1.4\log_q n-\log_q e\\
&\stackrel{(\text{ii})}{\leq}\delta-\lceil\log_q n\rceil-6t-2,
\end{align*}
where (i) comes from the fact that
$\left(1-\frac{1}{x}\right)^x<\frac{1}{e}$ for $x\geq 1$, and (ii)
holds when $0.4\log_q n+\log_q e\geq 6t+3$, i.e., $n\geq
q^{\frac{6t+3-\log_qe}{0.4}}$. Thus, each sequence in $\mathcal S$
can be represented by a $q$-ary sequence of length
$\delta-\lceil\log_q n\rceil-6t-2$, which gives an invertible
function $g:\mathcal
S\rightarrow\Sigma_q^{\delta-\lceil\log_qn\rceil-6t-2}$.

Computation of $g$ and $g^{-1}$ involve conversion of integers in
$[0,\left(q^{2t}-1\right)^{q^{2t}\lceil\log n\rceil}-1]$ between
$(q^{2t}-1)$-base representation and $q$-base representation, so
have time complexity $O(2tq^{2t}\lceil\log n\rceil)=O(\delta)$.
\end{proof}

In the rest of this paper, we will always let
$$\delta=2tq^{2t}\lceil\log n\rceil.$$
As we are interested in large $n$, we will always assume that
$n\geq q^{\frac{6t+3-\log_qe}{0.4}}$. The following lemma gives a
function for encoding $q$-ary strings to $(\bm p, \delta)$-dense
strings.

\begin{lem}\label{lem-enc-dss}
There exists an invertible function, denoted by
$\mathsf{EncDen}:\Sigma_q^{n-1}\rightarrow\Sigma_q^{n}$, such that
for every $\bm u\in\Sigma_q^{n-1}$, $\bm x=\mathsf{EncDen}(\bm u)$
is $(\bm p,\delta)$-dense. Both $\mathsf{EncDen}$ and its inverse,
denoted by $\mathsf{DecDen}$, are computable in $O(n\log n)$ time.
\end{lem}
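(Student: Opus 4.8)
The plan is to apply the sequence-replacement (SR) technique, using the compression map $g$ of Lemma~\ref{lem-comprs-nd} as the engine, in the spirit of \cite{Wang-S-22} and \cite{Wijngaarden10}--\cite{Sima23}. Given $\bm u\in\Sigma_q^{n-1}$, I would first append one fixed symbol, say $\bm x\gets\bm u\,0\in\Sigma_q^{n}$; this single symbol is the entire redundancy. I then iteratively remove violations of $(\bm p,\delta)$-density: while $\bm x$ is not $(\bm p,\delta)$-dense, let $j$ be the smallest index for which the length-$\delta$ window $W=x_{[j,j+\delta-1]}$ contains no copy of $\bm p$ (so $W\in\mathcal S$), and replace this window by a length-$\delta$ block of the form
$$
\bm p \,\big\|\, \mathsf{enc}(j) \,\big\|\, \bm f \,\big\|\, g(W),
$$
where $\mathsf{enc}(j)\in\Sigma_q^{\lceil\log_q n\rceil}$ encodes the pointer $j$ and $\bm f$ is a fixed flag word of constant length $4t+2$. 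By Lemma~\ref{lem-comprs-nd}, $|g(W)|=\delta-\lceil\log_q n\rceil-6t-2$, so the block has length $2t+\lceil\log_q n\rceil+(4t+2)+(\delta-\lceil\log_q n\rceil-6t-2)=\delta$; hence $|\bm x|$ stays equal to $n$ throughout, and the replaced region now carries the copy of $\bm p$ beginning the block. The choice of $\delta$ in Lemma~\ref{lem-comprs-nd} is exactly what closes the accounting: compressing a forbidden window buys $\lceil\log_q n\rceil+6t+2$ symbols, of which $2t$ pay for the reinstated $\bm p$, $\lceil\log_q n\rceil$ for the pointer, and $4t+2$ for flag symbols, so the encoder costs precisely one symbol of redundancy.

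Next I would establish termination and the density property. The key point is that processing the \emph{leftmost} forbidden window and seating a $\bm p$ at (or around) its left end cannot create forbidden windows overlapping the already-cleaned prefix, so a suitable potential — e.g.\ the position of the leftmost forbidden window, or the total number of forbidden windows — is monotone and the loop halts. When it exits, no length-$\delta$ substring avoids $\bm p$, i.e.\ $\bm x$ is $(\bm p,\delta)$-dense, which is the required property of $\mathsf{EncDen}(\bm u):=\bm x$. For invertibility I would run the process backwards: given $\bm x$ in the image, the flag word $\bm f$ (designed so that a genuine replacement block — preceded by $\bm p$, followed by a consistent continuation — is always unambiguously recognizable) lets the decoder locate the block inserted in the last iteration, read off $j$ and $g(W)$, recover $W=g^{-1}(g(W))$ by Lemma~\ref{lem-comprs-nd}, and undo the replacement by writing $W$ back into positions $[j,j+\delta-1]$. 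Iterating until no replacement block remains restores $\bm u\,0$, whose last symbol is discarded to return $\bm u$; this is $\mathsf{DecDen}$. Distinct $\bm u$ yield distinct $\bm u\,0$, hence distinct outputs, so $\mathsf{EncDen}$ is invertible onto its image.

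For complexity, each forward (or backward) iteration locates/updates one forbidden window and invokes $g$ (or $g^{-1}$); since a replacement changes only an $O(\delta)$-long region, the set of occurrences of $\bm p$, and hence the next forbidden window, can be maintained incrementally in $O(\delta)$ time per step, and with $O(n/\delta)$ effective replacements this gives a total of $O(n\log n)$ (recall $\delta=\Theta(\log n)$ for fixed $q,t$). I expect the main obstacle to be precisely the bookkeeping of the replacement step: one must simultaneously guarantee that (i) replacing the leftmost forbidden window does not spawn a new violation straddling a previously placed block (so that the procedure never has to overwrite a block it already inserted, which both secures a monotone progress measure and keeps decoding well-defined — typically arranged by bracketing each block, or parts of it, with $\bm p$ and by careful handling of block junctions), and (ii) the flag/pointer layout makes every replacement uniquely and efficiently reversible with the symbol counts closing at exactly one symbol of redundancy. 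Once this is in place, the density conclusion, injectivity, and the $O(n\log n)$ running time follow routinely.
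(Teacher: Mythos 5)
Your overall plan (sequence replacement driven by the compression map $g$ of Lemma~\ref{lem-comprs-nd}, one appended symbol, block length exactly $\delta$) matches the paper's strategy, and your symbol count $2t+\lceil\log_q n\rceil+(4t+2)+(\delta-\lceil\log_q n\rceil-6t-2)=\delta$ is the right accounting. But the two issues you defer to ``bookkeeping'' are precisely the content of the proof, and your in-place variant does not resolve them. First, invertibility: if forbidden windows are overwritten \emph{in place}, replacement blocks end up scattered inside the string, and the decoder must distinguish a genuine block from original data of $\bm u$ that happens to imitate the format ($\bm p$, a pointer equal to its own position, your flag $\bm f$, a valid $g$-image); since $\bm u\in\Sigma_q^{n-1}$ is arbitrary, no choice of a constant-length flag can make such imitations impossible, and an untouched ``fake'' block sitting in an already-dense region would be wrongly undone. (Note also that with in-place replacement the pointer $\mathsf{enc}(j)$ is redundant; it is only meaningful when the block is stored away from position $j$.) Second, progress: neither of your proposed potentials is monotone. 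Overwriting the window $x_{[j,j+\delta-1]}$ can destroy an occurrence of $\bm p$ that straddles position $j$ (e.g.\ data $0^t1^{t-1}$ ending at $j-1$ followed by a $1$ at position $j$ that your block turns into a $0$), so a window starting \emph{left} of $j$ can newly become forbidden; subsequent replacements can then overlap and partially overwrite blocks you already placed, which both breaks the LIFO structure you need for decoding and invalidates the ``leftmost violation moves right'' argument.

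The paper's Algorithm~1 avoids both problems by not replacing in place: a violating window in the unprocessed prefix $x_{[1,n']}$ is \emph{deleted} and an encoded block $(\bm p,\bm p,i,g(\cdot),0,1^{\ast},0)$ of the same total length is \emph{appended} in a dedicated suffix region, which is kept anchored by the marker $\bm p$ at positions $[n'+1,n'+2t]$ and is by construction of the form $\bm p\bm u\bm p\bm p\bm v\cdots\bm p\bm p\bm w$ with all gaps at most $\delta+1-4t$, hence automatically dense (Remark~\ref{rem-dense}); windows straddling the prefix/suffix boundary are handled separately by padding with $0^\ell$ and recording $\ell$ in the flag $0,1^{2t-\ell},0$. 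Because blocks live only in this suffix and carry the pointer $i$, the decoder peels them off from a known position in reverse order with no ambiguity, and because the boundary $n'$ strictly decreases in every round, termination in at most $n$ rounds and the $O(n\log n)$ bound follow immediately. To repair your proof you would either have to adopt this delete-and-append structure (or an equivalent structural invariant separating payload from blocks), or supply a genuinely new argument for unique parsability and monotone progress under in-place replacement; as written, the proposal has a real gap on both points.
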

\begin{proof}
Let $g$ be the function constructed in Lemma \ref{lem-comprs-nd}.
The functions $\mathsf{EncDen}$ and $\mathsf{DecDen}$ are
described by Algorithm 1 and Algorithm 2 respectively, where each
integer $i\in[n]$ is also viewed as a $q$-ary string of length
$\lceil\log n\rceil$ which is the $q$-base representation of $i$.

The correctness of Algorithm 1 can be proved as follows:
\begin{itemize}
 \item[1)] In the initialization step, if
 $\tilde{\bm u}=u_{[n-\delta+2t,n-1]}$ contains $\bm p$, then
 clearly, $\bm x$ has length $n$. If
 $\tilde{\bm u}=u_{[n-\delta+2t,n-1]}$ doest not contain $\bm p$,
 then the length of $\bm x$ is $|\bm x|=|(u_{[1,n']},\bm p,\bm p,
 g((\tilde{\bm u},0^{2t})),0^{\lceil\log_qn\rceil+3})|
 =n'+4t+|g((\tilde{\bm u},0^{2t}))|+\lceil\log_qn\rceil+3=n$,
 where $n'=n-\delta+2t-1$ and by Lemma \ref{lem-comprs-nd},
 $|g((\tilde{\bm u},0^{2t}))|=\delta-\lceil\log_qn\rceil-6t-2$.
 So, at the end of
 the initialization step, $\bm x$ has
 length $n$. Moreover, $x_{[n'+1,n'+2t]}=\bm p$ and the
 substring $x_{[n'+2t+1,n]}$ has length $\leq\delta-4t+1$.
 \item[2)] In each round of the replacement step, if
 $\tilde{\bm x}\triangleq x_{[i,i+\delta-1]}$ does
 not contain $\bm p$ for some $i\in[1, n'-\delta+1]$, then by Lemma
 \ref{lem-comprs-nd}, $|(\bm p,\bm p,i,g(\tilde{\bm x}),0,1^{2t},0)|
 =\delta=|x_{[i,i+\delta-1]}|$, so by replacement, the length of
 the appended string equals to the length of the deleted substring,
 and hence the length of $\bm x$ keeps unchanged.
 \item[3)] At the beginning of each round of the
 replacement step, we have $x_{[n'+1,n'+2t]}=\bm p$, so for
 $i\in[n'+2t-\delta+1,n']$, the substring $x_{[i,i+\delta-1]}$
 contains $\bm p$. Equivalently, if
 $\tilde{\bm x}\triangleq x_{[i,i+\delta-1]}$ does
 not contain $\bm p$ for some $i\in[n'-\delta+2,n']$, then it must
 be that $i\in[n'-\delta+2, n'+2t-\delta]$. In this case,
 $|(\bm p,\bm p,i,g((x_{[i,n']},0^{\ell})),0^{},1^{2t-\ell},0)|
 =\delta-\ell=|x_{[i,n']}|$, so by replacement, the length of
 the appended string equals to the length of the deleted substring,
 and hence the length of $\bm x$ keeps unchanged.
 \item[4)] By 1), 2) and 3), the substring
 $x_{[n'+1,n-\delta+1]}$ is always of the form
 $\bm p\bm u\bm p\bm p\bm v\cdots\bm p\bm p\bm w$,
 where all substrings $\bm u, \bm v, \cdots, \bm w$ have length not
 greater than $\delta+1-4t$, so by Remark \ref{rem-dense},
 for each $i\in[n'+1,n-\delta+1]$,
 the substring $x_{[i,i+\delta-1]}$ contains $\bm p$.
 \item[5)] At the end of each round of the replacement step,
 the value of $n'$ strictly decreases, so the \textbf{While}
 loop will end after at most $n$
 rounds, and at this time, for each $i\in[1,n']$,
 the substring $x_{[i,i+\delta-1]}$ contains $\bm p$, which
 combining with 4) implies that $\bm x$ is $(\bm p, \delta)$-dense.
\end{itemize}

The correctness of Algorithm 2 can be easily seen from Algorithm
1, so $\mathsf{DecDen}$ is the inverse of $\mathsf{EncDen}$.

Note that Algorithm 1 and Algorithm 2 have at most $n$ rounds of
replacement and in each round $g~($resp. $g^{-1})$ needs to be
computed, which has time complexity $O(\delta)=O(\log n)$ by Lemma
\ref{lem-comprs-nd}, so the total time complexity of Algorithm 1
and Algorithm 2 is $O(n\log n)$.
\end{proof}

The Algorithm 1 generalizes the Algorithm 2 of \cite{Wang-S-22},
which is for binary sequences. Moreover, our algorithm has only
one symbol of redundancy for all $q\geq 2$, while the algorithm in
\cite{Wang-S-22} has $4t$ bits of redundancy.

\section{Burst-deletion correcting $q$-ary codes}
In this section, using $(\bm p, \delta)$-dense sequences, we
construct a family of $q$-ary codes that can correct a burst of at
most $t$ deletions, where $t$, $q\geq 2$ are fixed integers and
$\delta=2tq^{2t}\lceil\log n\rceil$. In our construction, each
$q$-ary string is also viewed as an integer represented with base
$q$.

Let $\rho=3\delta=6tq^{2t}\lceil\log n\rceil$ and
\begin{equation}\label{def-L-intvl}
L_j=\left\{\!\begin{aligned} &[(j-1)\rho+1,(j+1)\rho],
\!~\text{for}~j\in\{1,\cdots,
\left\lceil n/\rho\right\rceil-2\},\\
&[(j-1)\rho+1, n], ~~~~~~~~~\text{for}~j=\left\lceil
n/\rho\right\rceil-1.
\end{aligned}\right.
\end{equation}
The following remarks are easy to see.
\begin{rem}\label{rem-sets-L}
The intervals $L_j$, $j=1,\cdots,\left\lceil
n/\rho\right\rceil-1$, satisfy:
\begin{itemize}
 \item[1)] For any interval $L\subseteq[n]$ of length at most
 $\rho$, there is a
 $j_0\in\{1,2,\cdots,\left\lceil n/\rho\right\rceil-1\}$
 such that $L\subseteq L_{j_0}$.
 \item[2)] $L_j\cap L_{j'}=\emptyset$ for all $j,j'\in[1,
 \left\lceil n/\rho\right\rceil-1]$ such that $|j-j'|\geq 2$.
\end{itemize}
\end{rem}

The following construction gives a sketch function for correcting
a burst of at most $t$ deletions for $q$-ary sequences.

\textbf{Construction 1}: Let $h$ be the function constructed as in
Lemma \ref{lem-sgbd-a}. Let $L_j$, $j=1,2,\cdots,\left\lceil
n/\rho\right\rceil-1$, be the intervals defined by
\eqref{def-L-intvl}. For each $\bm x\in\Sigma_q^{n}$ and each
$\ell\in\{0,1\}$, let
\begin{align}\label{Con3-h-ell}
\bar{h}^{(\ell)}(\bm x)=\sum_{\substack{j\in[1,
\left\lceil n/\rho\right\rceil-1\}:\\
j\!~\equiv\!~\ell~\text{mod}~2}}
h(x_{L_j})~(\text{mod}~\overline{N}),\end{align} where
$$\overline{N}=q^{4\log_q(2\rho)+o(\log_q(2\rho))}.$$
Then let
\begin{align*}
f(\bm x)=\left(a_0(\bm x)~(\text{mod}~4),a_1(\bm
x)~(\text{mod}~2n),\bar{h}^{(0)}(\bm x), \bar{h}^{(1)}(\bm
x)\right).\end{align*} where $a_0(\bm x)$ and $a_1(\bm x)$ are
defined by \eqref{eq-def-a0} and \eqref{eq-def-a1} respectively.

\begin{thm}\label{thm-burst-del-sketch-1}
For each $\bm x\in\Sigma_q^n$, the function $f(\bm x)$ is
computable in time $O(q^{7t}n(\log n)^3)$, and when viewed as a
binary string, the length $|f(\bm x)|$ of $f(\bm x)$ satisfies
$$|f(\bm x)|\leq\log n+8\log\log
n+o(\log\log n)+\gamma_{q,t},$$ where $\gamma_{q,t}$ is a constant
depending only on $q$ and $t$. Moreover, if $\bm x$ is $(\bm p,
\delta)$-dense, then given $f(\bm x)$ and any $\bm y\in\mathcal
B_{\leq t}(\bm x)$, one can uniquely recover $\bm x$.
\end{thm}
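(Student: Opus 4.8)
The plan is to verify the three claims of the theorem—complexity, redundancy, and decodability—in that order, with the decodability argument being the substantive one. For the complexity, I would note that $a_0(\bm x)$ and $a_1(\bm x)$ require computing the indicator vector $\mathbbm 1_{\bm p}(\bm x)$, which costs $O(tn)$, and the sums are $O(n)$. For $\bar h^{(\ell)}(\bm x)$, there are $O(n/\rho)$ intervals $L_j$, each of length at most $2\rho=O(\log n)$, and for each we invoke Lemma~\ref{lem-sgbd-a} with parameter $2\rho$ in place of $n$; that lemma costs $O(q^t(2\rho)^3)=O(q^t(\log n)^3)$ per interval, giving $O(q^t n(\log n)^2)$ overall. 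The extra factor $q^{6t}$ in the stated bound $O(q^{7t}n(\log n)^3)$ comes from the constant hidden in $\rho=6tq^{2t}\lceil\log n\rceil$ and the cubing, so I would just bound things coarsely and absorb the polynomial-in-$q$ factors. For the redundancy, $a_0(\bm x)\bmod 4$ costs $2$ bits, $a_1(\bm x)\bmod 2n$ costs $\log n+1$ bits, and each of $\bar h^{(0)},\bar h^{(1)}$ lies in $[0,\overline N-1]$ with $\overline N=q^{4\log_q(2\rho)+o(\log_q(2\rho))}$, so $\log\overline N=4\log(2\rho)+o(\log\rho)=4\log\log n+o(\log\log n)+O(\log q+\log t)$; summing the two of them gives $8\log\log n+o(\log\log n)$ plus a $q,t$-dependent constant, and adding $\log n$ from $a_1$ yields the claimed $\log n+8\log\log n+o(\log\log n)+\gamma_{q,t}$.

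The decodability argument is the core. Assume $\bm x$ is $(\bm p,\delta)$-dense and let $\bm y\in\mathcal B_{\le t}(\bm x)$, say $\bm y\in\mathcal B_{t'}(\bm x)$ with $t'\le t$; the value $t'=|\bm x|-|\bm y|$ is known. First I would apply Lemma~\ref{lem-Bnry-burst-Lenz}: from $a_0(\bm x)\bmod 4$ and $a_1(\bm x)\bmod 2n$ (both components of $f(\bm x)$) together with $\bm y$, we recover in $O(n)$ time an interval $L\subseteq[n]$ with $|L|\le 3\delta=\rho$ such that $\bm y=x_{[n]\setminus D}$ for some interval $D\subseteq L$ with $|D|=t'$. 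By part~1) of Remark~\ref{rem-sets-L}, since $|L|\le\rho$, there is an index $j_0$ with $L\subseteq L_{j_0}$, hence $D\subseteq L_{j_0}$; this $j_0$ is determined by $L$ and is therefore known to the decoder. Fix the parity $\ell_0=j_0\bmod 2$.

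The key observation is that the burst-deletion $D$ affects \emph{only} the block $x_{L_{j_0}}$ among all blocks with index of parity $\ell_0$: by part~2) of Remark~\ref{rem-sets-L}, the intervals $\{L_j : j\equiv\ell_0\bmod 2\}$ are pairwise disjoint (any two distinct such indices differ by at least $2$), so for every $j\ne j_0$ with $j\equiv\ell_0\bmod 2$ we have $L_j\cap D=\emptyset$ and thus $x_{L_j}$ is unchanged and recoverable from $\bm y$ (its coordinates appear verbatim in $\bm y$, shifted by a known amount since $D$ lies to one side). Consequently, from $\bm y$ and the known index set, the decoder computes $\sum_{j\equiv\ell_0,\,j\ne j_0} h(x_{L_j})\bmod\overline N$, subtracts it from $\bar h^{(\ell_0)}(\bm x)$, and recovers $h(x_{L_{j_0}})\bmod\overline N$; since $h$ takes values in $\Sigma_q^{4\log_q(2\rho)+o(\log_q(2\rho))}$, which is an integer in $[0,\overline N-1]$ by the choice of $\overline N$, this reduction is lossless and we obtain $h(x_{L_{j_0}})$ exactly. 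Now $x_{L_{j_0}}$ is a string of length at most $2\rho$, and $y_{L_{j_0}\setminus D}$ (the portion of $\bm y$ corresponding to the surviving coordinates of $L_{j_0}$, identifiable from the known position of $D$) lies in $\mathcal B_{\le t}(x_{L_{j_0}})$; by Lemma~\ref{lem-sgbd-a} applied with block length $2\rho$, the pair $(h(x_{L_{j_0}}),\,y_{L_{j_0}\setminus D})$ uniquely determines $x_{L_{j_0}}$. Splicing the recovered $x_{L_{j_0}}$ back into the unchanged coordinates $x_{[n]\setminus L_{j_0}}$ (obtained from $\bm y$) reconstructs $\bm x$.

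The main obstacle—and the point needing the most care—is the bookkeeping that makes "$x_{L_j}$ is recoverable from $\bm y$ for $j\ne j_0$" and "$y_{L_{j_0}\setminus D}\in\mathcal B_{\le t}(x_{L_{j_0}})$" precise: one must track how a deletion of an interval $D\subseteq L_{j_0}$ relabels coordinates, confirm that all coordinates outside $L_{j_0}$ keep a known (merely shifted) position in $\bm y$, and handle the boundary block $j=\lceil n/\rho\rceil-1$ whose interval runs to $n$. Part~2) of Remark~\ref{rem-sets-L} is exactly what guarantees no two same-parity blocks overlap, so a single burst cannot straddle two of them; and the factor-of-two use of parities is what lets us isolate a single corrupted block rather than summing over all blocks (which would not be invertible after one corruption). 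Once these indexing details are nailed down, everything else is substitution into the already-proved lemmas.
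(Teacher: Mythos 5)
Your proposal is correct and follows essentially the same route as the paper: Lemma~\ref{lem-Bnry-burst-Lenz} to localize the burst within an interval of length at most $\rho$, Remark~\ref{rem-sets-L} to place it inside a single $L_{j_0}$ and to isolate $h(x_{L_{j_0}})$ from the same-parity modular sum, and Lemma~\ref{lem-sgbd-a} to recover the corrupted block, with the same complexity and length accounting. The only slight imprecision is that the exact position of $D$ is not known to the decoder (only $D\subseteq L_{j_0}$), but as in the paper the surviving segment $y_{[\lambda,\lambda'-t']}$ is determined by $L_{j_0}$ and $t'$ alone, so this does not affect the argument.
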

\begin{proof}
By \eqref{eq-def-a0} and \eqref{eq-def-a1}, $a_0(\bm x)$ and
$a_1(\bm x)$ are computable in linear time. By Lemma
\ref{lem-sgbd-a}, the functions $\bar{h}^{(0)}(\bm x)$ and
$\bar{h}^{(1)}(\bm x)$ are computable in time $($noticing that
each $|L_j|=2\rho=6\delta)$
\begin{align*}
O(nq^t|L_j|^3)&=O(nq^t(12tq^{2t}\lceil\log n\rceil)^3)\\
&=O(q^{7t}n(\log n)^3).
\end{align*}
Hence, by Construction 1, we can see that $f(\bm x)$ is computable
in time $O(q^{7t}n(\log n)^3)$.

Since $\delta=2tq^{2t}\lceil\log n\rceil$, then by
\eqref{eq-def-a0}, \eqref{eq-def-a1} and by Lemma
\ref{lem-sgbd-a}, the length of $f(\bm x)$ (viewed as a binary
string) satisfies
\begin{align*}
|f(\bm x)|&=|a_0(\bm x)|+|a_1(\bm x)|+|\bar{h}^{(0)}(\bm
x)|+|\bar{h}^{(1)}(\bm x)|\\&=\log n+3+2\log\overline{N}\\&=\log
n+8\log\rho+o(\log\rho)+\gamma_{q,t}\\&=\log n+8\log\log
n+o(\log\log n)+\gamma_{q,t},\end{align*} where $\gamma_{q,t}$ is
a constant depending only on $q$ and $t$.

Finally, we prove that if $\bm x\in\Sigma_q^n$ is $(\bm
p,\delta)$-dense, then given $f(\bm x)$ and any $\bm y\in\mathcal
B_{\leq t}(\bm x)$, one can uniquely recover $\bm x$.

Suppose $\bm y\in\mathcal B_{t'}(\bm x)$, where $t'=n-|\bm
y|\in[t]$. First, by Lemma \ref{lem-Bnry-burst-Lenz}, from
$a_0(\bm x)~(\text{mod}~4)$ and $a_1(\bm x)~(\text{mod}~2n)$, we
can find an interval $L$ of length at most $\rho=3\delta$ such
that $\bm y=x_{[n]\backslash D}$ for some interval $D\subseteq L$
of size $t'$. By 1) of Remark \ref{rem-sets-L}, there is a
$j_0\in\{1,2,\cdots,\left\lceil n/\rho\right\rceil-1\}$ such that
$L\subseteq L_{j_0}$. Denote $L_{j_0}=[\lambda,\lambda']$. Then we
have: i) $x_{[1,\lambda-1]}=y_{[1,\lambda-1]}$ and
$x_{[\lambda'+1,n]}=y_{[\lambda'+1-t',n-t']}$; ii)
$y_{[\lambda,\lambda'-t']}\in\mathcal
B_{t'}(x_{[\lambda,\lambda']})=\mathcal B_{t'}(x_{L_{j_0}})$. We
can recover $x_{L_{j_0}}$ from $\bar{h}^{(0)}(\bm x)$,
$\bar{h}^{(1)}(\bm x)$ and $y_{[\lambda,\lambda'-t']}$ as follows.

For each $j\in[1,\left\lceil n/\rho\right\rceil-1]$ such that
$j\equiv j_0~(\text{mod}~2)$, by 2) of Remark \ref{rem-sets-L},
$L_j\subseteq[1,\lambda]$ or $L_j\subseteq[\lambda'+1,n]$, so
$h(x_{L_j})$ can be computed from $x_{[1,\lambda-1]}$ and
$x_{[\lambda'+1,n]}$. Moreover, by Lemma \ref{lem-sgbd-a}, we have
$h(x_{L_j})<\overline{N}$. Then $h(x_{L_{j_0}})$ can be solved
from \eqref{Con3-h-ell} and further, by Lemma \ref{lem-sgbd-a},
$x_{L_{j_0}}$ can be recovered from $y_{[\lambda,\lambda'-t']}$.
Thus, $\bm x$ can be recovered from $f(\bm x)$ and $\bm y$, which
completes the proof.
\end{proof}

Now, we can give an encoding function of a family of $q$-ary codes
capable of correcting a burst of at most $t$ deletions.

\begin{thm}\label{thm-bst-enc}
Let
\begin{align*}
\mathcal E: \Sigma_q^{n-1}&\rightarrow ~~~\Sigma_q^{n+r}\\
\bm u~~&\mapsto\big(\bm x, 0^{t}1, f_q(\bm x)\big)\end{align*}
where $\bm x=\mathsf{EncDen}(\bm u)$, $f_q(\bm x)$ is the $q$-ary
representation of $f(\bm x)$ and $r=t+1+|f_q(\bm
x)|=\log_qn+8\log_q\log_qn+o(\log_q\log_qn)+\gamma_{q,t}.$ Then
for each $\bm z=\mathcal E(\bm u)$, given any $\bm y\in\mathcal
B_{\leq t}(\bm z)$, one can recover $\bm x~($and so $\bm z)$
correctly.
\end{thm}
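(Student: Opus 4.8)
The plan is to show that the codeword $\bm z = (\bm x,\, 0^t1,\, f_q(\bm x))$ can be parsed and corrected after a burst of at most $t$ deletions, by first isolating where the deletions landed among the three blocks and then invoking the appropriate recovery tool. Let $t' = |\bm z| - |\bm y| \in [0,t]$ be the actual burst length; if $t'=0$ there is nothing to do, so assume $t' \ge 1$. The received word $\bm y$ has length $n+r-t'$. The decoder first locates the marker segment $0^t1$, which separates the information part $\bm x$ (of length $n$) from the redundancy part $f_q(\bm x)$ (of length $r-t-1$). Because the burst is a single contiguous run of at most $t$ deletions, it can intersect at most one of the following three regions in a way that destroys structure: (a) entirely inside the $\bm x$-block, (b) straddling/near the marker, or (c) entirely inside the $f_q(\bm x)$-block. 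The marker $0^t1$ must be chosen (and is, by Construction~1 together with the $(\bm p,\delta)$-density of $\bm x$, since $\bm x$ ends in a way that does not spuriously create $0^t1\,0^{\ge 1}$ patterns adjacent to the tail) so that the decoder can unambiguously find the boundary even after $t'$ deletions in or around it.

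The key steps, in order: First I would argue that the decoder can always recover $f(\bm x)$ exactly. If the burst falls in region (a), then $f_q(\bm x)$ is received intact and the marker is intact; the decoder reads off $f(\bm x)$ directly. If the burst falls in region (b) or (c), then at most $t$ symbols total are missing from the marker-plus-redundancy suffix, but crucially the $\bm x$-block is received intact, so the decoder already has $\bm x$ and hence $\bm z$ — no further work is needed. So the only substantive case is region (a): the burst lies strictly inside the first $n$ symbols. Here the decoder has the intact suffix $0^t1\, f_q(\bm x)$, recovers $f(\bm x)$, and observes $\bm y' \in \mathcal B_{\le t}(\bm x)$ (the length-$(n-t')$ prefix of the $\bm x$-region). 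Since $\bm x = \mathsf{EncDen}(\bm u)$ is $(\bm p,\delta)$-dense by Lemma~\ref{lem-enc-dss}, Theorem~\ref{thm-burst-del-sketch-1} applies: from $f(\bm x)$ and $\bm y'$ the decoder uniquely recovers $\bm x$, and then $\bm u = \mathsf{DecDen}(\bm x)$.

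The main obstacle — and the point that needs the most care — is the boundary-detection step: making sure that after up to $t$ deletions anywhere near the junction between $\bm x$ and $0^t1$, the decoder can still correctly split $\bm y$ into "the part coming from $\bm x$" and "the part coming from $0^t1\,f_q(\bm x)$", and in particular can tell which of the three regions the burst hit. The standard argument is that $0^t1$ acts as a synchronization marker: it is the unique run of exactly $t$ zeros followed by a one at that location, and since $f_q(\bm x)$ has bounded length $r-t-1 = O(\log n)$ one can scan from the right a bounded distance to locate it; any burst disturbing the marker leaves the length-$n$ prefix untouched. I would also need the minor observation that $r$ is large enough ($r > t$) that the suffix $0^t1 f_q(\bm x)$ is long enough to absorb a burst of $t$ without eating into $\bm x$, which is immediate from the stated formula for $r$. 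The remaining verifications — that $|f_q(\bm x)|$ has the claimed size and that everything is computable in the claimed time — are inherited verbatim from Theorem~\ref{thm-burst-del-sketch-1} and Lemmas~\ref{lem-enc-dss} and \ref{lem-sgbd-a}.
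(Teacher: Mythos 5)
Your overall architecture is the right one and matches the paper's in spirit: infer $t'=|\bm z|-|\bm y|$ from the received length, decide where the burst fell, read $\bm x$ off directly when it is intact, and otherwise recover $f(\bm x)$ from the intact suffix and apply Theorem~\ref{thm-burst-del-sketch-1} (with $(\bm p,\delta)$-density supplied by Lemma~\ref{lem-enc-dss}). However, there is a genuine gap at exactly the point you flag as ``the main obstacle'' and then wave away: a burst of $t'\le t$ deletions can \emph{straddle} the boundary between $\bm x$ and the marker, deleting the last few symbols of $\bm x$ together with a prefix of the run $0^t$. Your assertion that ``any burst disturbing the marker leaves the length-$n$ prefix untouched'' is false in this case, so your dichotomy --- region (a) with $f_q(\bm x)$ intact versus regions (b)/(c) with $\bm x$ intact --- does not cover it: neither is the burst strictly inside the first $n$ symbols, nor is the $\bm x$-block received intact. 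The paper devotes its Case~1.2 to precisely this situation: since the marker has length $t+1>t'$, the trailing $1$ at position $n+t+1$ and the whole block $f_q(\bm x)$ survive; the decoder determines from a fixed window around the (shifted) marker position how many deletions $t''$ hit the $0^t$ run and how many $t'-t''$ hit $\bm x$ (conditions of the form $y_{[n+1-t'+t'',\,n+1+t-t']}=0^{t-t''}1$ with $y_{n-t+t''}\neq 0$), and then recovers $\bm x$ from the surviving prefix, which lies in $\mathcal B_{t'-t''}(\bm x)$, together with the intact $f_q(\bm x)$, again via Theorem~\ref{thm-burst-del-sketch-1}. Note also that the same received string may be explainable both by a burst inside $\bm x$ (if $\bm x$ ends in zeros) and by a burst eating marker zeros; the decoder need not identify the true deletion interval, only some consistent interpretation, and the window conditions guarantee the chosen interpretation always yields an element of $\mathcal B_{\le t}(\bm x)$.

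A secondary weakness is that your boundary detection by ``scanning from the right for $0^t1$'' is underspecified: the pattern $0^t1$ can also occur inside $f_q(\bm x)$ or at the end of $\bm x$, so a search is not obviously well defined. The paper needs no search at all: knowing $t'$, it tests the single symbol $y_{n+t+1-t'}$, which equals $1$ exactly when the burst lies entirely to the left of the marker's trailing $1$ (so $f_q(\bm x)$ is intact at the fixed positions $y_{[n+t+2-t',\,n+r-t']}$), and equals $0$ exactly when the burst lies entirely to the right of position $n+1$ (so $\bm x=y_{[1,n]}$); within the first case it then matches fixed windows against $0^{t-t''}1$ to separate the three subcases. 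To repair your proof you would need to replace the marker search by such an explicit position-based test and add the straddling subcase with its recovery through Theorem~\ref{thm-burst-del-sketch-1}.
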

\begin{proof}
Let $t'=|\bm z|-|\bm y|$. Suppose
$D=[i_{\text{d}},i_{\text{d}}+t'-1]\subseteq[1,n+r]$ is an
interval such that $\bm y=z_{[n+r]\backslash D}$. Then we have
$i_{\text{d}}\in[1,n+r-t'+1]$. Clearly, if
$i_{\text{d}}\in[1,n+t+1-t']$, then $y_{n+t+1-t'}=z_{n+t+1}=1$; if
$i_{\text{d}}\in[n+t+2-t',n+r-t'+1]$, then
$y_{n+t+1-t'}=z_{n+t+1-t'}=0$. So, we can consider the following
two cases.

Case 1: $y_{n+t+1-t'}=1$. Then $i_{\text{d}}\in[1,n+t-t'+1]$. We
need further to consider the following three subcases.

Case 1.1: $y_{[n+1-t',n+1+t-t']}=0^{t}1$. In this case, it must be
that $D\subseteq[1,n]$. Therefore, we have
$y_{[1,n-t']}\in\mathcal B_{t'}(\bm x)$ and
$y_{[n+t+2-t',n+r-t']}=f_q(\bm x)$. By Theorem
\ref{thm-burst-del-sketch-1}, $\bm x$ can be recovered from
$y_{[1,n-t']}$ and $y_{[n+t+2-t',n+r-t']}$ correctly.

Case 1.2: There is a $t''\in[1,t'-1]$ such that
$y_{[n+1-t'+t'',n+1+t-t']}=0^{t-t''}1$ and $y_{n-t+t''}\neq 0$. In
this case, it must be that $D=[n+1-t'+t'',n+t'']$. Therefore,
$y_{[1,n+1-t'+t'']}\in\mathcal B_{t'-t''}(\bm x)$ and
$y_{[n+t+2-t',n+r-t']}=f_q(\bm x)$. By Theorem
\ref{thm-burst-del-sketch-1}, $\bm x$ can be recovered from
$y_{[1,n+1-t'+t'']}$ and $y_{[n+t+2-t',n+r-t']}$ correctly.

Case 1.3: $y_{[n+1,n+1+t-t']}=0^{t-t'}1$ and $y_{n}\neq 0$. In
this case, it must be that $D\subseteq[n+1,n+t]$. Therefore,
$y_{[1,n]}=\bm x$.

Case 2: $y_{n+t+1-t'}=0$. Then we have
$i_{\text{d}}\in[n+t+2-t',n+r-t'+1]$ and $\bm x=y_{[1,n]}$.

Thus, $\bm x$ can always be recovered correctly from $\bm y$.
\end{proof}

\section{Conclusions and Discussions}
We proposed a new construction of $q$-ary codes correcting a burst
of at most $t$ deletions. Compared to existing works, which have
redundancy either $\log n+O(\log q\log\log n)$ bits or $\log
n+O(t^2\log\log n)$ bits, our new construction has a lower
redundancy of $\log n+8\log\log n+o(\log\log n)+\gamma_{q,t}$
bits, where $\gamma_{q,t}$ is a constant that only depends on $q$
and $t$.

We can also consider a more general scenario, which allows
decoding with multiple reads (also known as \emph{reconstruction
codes} \cite{KuiCai22}), then with techniques of this work, we can
construct $q$-ary reconstruction codes correcting a burst of at
most $t$ deletions with two reads, and with redundancy $8\log\log
n+o(\log\log n)+\gamma_{q,t}$ bits. This improves the construction
in \cite{Y-Sun-23}, which has redundancy $t(t+1)/2\log\log
n+\gamma_{q,t}'$ bits, where $\gamma_{q,t}'$ is a constant that
only depends on $q$ and $t$. The problem of correcting a burst of
at most $t$ deletions under reconstruction model will be
investigated in our future work.


\vspace{10pt}

\end{document}